  \newtheorem{lemma}{Lemma}
  \newtheorem{theorem}{Theorem}
  \newtheorem{corollary}{Corollary}
  \newtheorem{observation}{Observation}
  \newtheorem{remark}{Remark}
  \theoremstyle{definition}
  \newtheorem{definition}{Definition}
\newcommand{\cC}{\mathcal{C}}
\newcommand{\cP}{\mathcal{P}}
\newcommand{\cB}{\mathcal{B}}
\newcommand{\cS}{\mathcal{S}}
\newcommand{\cI}{\mathcal{I}}
\newcommand{\precs}{\prec_s}
\newcommand{\width}[1]{\textup{width}(#1)}
\newcommand{\pathwidth}[1]{\textup{pw}(#1)}
\newcommand{\cpw}[1]{\textup{cpw}(#1)}
\newcommand{\st}{\hspace{0.1cm}\bigl|\bigr.\hspace{0.1cm}}
\newcommand{\card}[1]{\left|#1\right|}
\newcommand{\branches}[1]{\cB(#1)}
\newcommand{\branchesin}[2]{\cB^{#1}(#2)}
\newcommand{\bag}[1]{\textup{bag}(#1)}
\newcommand{\cover}[1]{\textup{cover}(#1)}
\newcommand{\border}[1]{\textup{border}(#1)}
\newcommand{\botStart}[2]{t_1(#1,#2)}
\newcommand{\botEnd}[2]{t_2(#1,#2)}
\newcommand{\cMin}[1][]{%
\ifthenelse{\equal{#1}{}}{c_{\min}}{c_{\min}(#1)}%
}
\newcommand{\graphStart}[2]{\alpha(#1,#2)}
\newcommand{\graphEnd}[2]{\beta(#1,#2)}
\newcommand{\startv}{s^*}
\newcommand{\graphStartS}[1]{\alpha(#1)}
\newcommand{\graphEndS}[1]{\beta(#1)}
\newcommand{\botStartS}[1]{t_1(#1)}
\newcommand{\botEndS}[1]{t_2(#1)}
\newcommand{\preB}{\textup{pre}}
\newcommand{\inB}{\textup{in}}
\newcommand{\postB}{\textup{post}}
\begin{document}

\renewcommand{\thefootnote}{\fnsymbol{footnote}}
\title{Finding small-width connected path decompositions in polynomial time}
\author{Dariusz Dereniowski\footnotemark[1] \and Dorota Osula\footnotemark[1] \and Paweł Rzążewski\footnotemark[2]}
\footnotetext[1]{Faculty of Electronics, Telecommunications and Informatics, Gda{\'n}sk University of Technology, Gda{\'n}sk, Poland}
\footnotetext[2]{Faculty of Mathematics and Information Science, Warsaw University of Technology, Warsaw, Poland}
\date{}

\maketitle

\begin{abstract}
A connected path decomposition of a simple graph $G$ is a path decomposition $(X_1,\ldots,X_l)$ such that the subgraph of $G$ induced by $X_1\cup\cdots\cup X_i$ is connected for each $i\in\{1,\ldots,l\}$.
The connected pathwidth of $G$ is then the minimum width over all connected path decompositions of $G$.
We prove that for each fixed $k$, the connected pathwidth of any input graph can be computed in polynomial-time.
This answers an open question raised by Fedor V. Fomin during the GRASTA 2017 workshop, since connected pathwidth is equivalent to the connected (monotone) node search game.
\end{abstract}

\textbf{Keywords:} connected graph searching, connected pathwidth, graph searching, pathwidth

\section{Introduction} \label{sec:intro}

Since the famous `graph minor' project by Robertson and Seymour that started with \cite{RobertsonS83}, the notions of treewidth and pathwidth received growing interest and a vast amount of results has been obtained.
The pathwidth, informally speaking, allows us to say how closely an arbitrary graph resembles a path.
This concept proved to be useful in designing algorithms for various graph problems, especially in the case when the pathwidth of an input graph is small (e.g. fixed), in which case quite often a variant of well-known dynamic programming approach that progresses along a path decomposition of input graph turns out to be successful.

Several modifications to pathwidth have been proposed and in this work we are interested in the connected variant in which one requires that a path decomposition $(X_1,\ldots,X_l)$ of a graph $G$ satisfies: the vertices $X_1\cup\cdots\cup X_i$ induce a connected subgraph in $G$ for each $i\in\{1,\ldots,l\}$.
This version of the classical pathwidth problem is motivated by several pursuit-evasion games, including, but not limited to, \emph{edge search}, \emph{node search} or \emph{mixed search}~\cite{searchingPebbling,ParsonsPursuitEvasion,takahashi1995mixed}.
More precisely, computing the minimum number of searchers needed to clean a given graph $G$ in the node search game (i.e., computing the \emph{node search number} of $G$) is equivalent to computing the connected pathwidth of $G$. Moreover, a connected path decomposition can be easily translated into the corresponding node search strategy that cleans $G$ and vice versa.
For further references that provide more details on correspondences between connected path decomposition and different variants of the search games see e.g.~\cite{BFFFNST12,Dereniowski12SIDMA,FN06,FT08}.

\subsection{Related work} \label{sec:related}

A lot of research has been done in the direction of obtaining FPT algorithms for pathwidth, parametrized by the pathwidth $k$.
One of the first polynomial-time algorithm was presented in $1983$ by Ellis \emph{et al.} and had running time $O(n^{2k^2+4k+2})$~\cite{EllisST83}. Later these results were improved (e.g., \cite{Bodlaender96,BodlaenderK96}) leading to the currently fastest FPT algorithm, working in time $2^{O(k^2)}n$~\cite{Furer16}. In these algorithms, in order to produce an optimal solution, an approximate path- or tree-decomposition is pre-computed.
It is thus of interest to have good approximations for these problems. Numerous works have been published in this direction \cite{Amir10,Lagergren96,Reed92}, leading to the currently fastest algorithm for constant-factor approximation for treewidth, working in time $2^{O(k)}n$~\cite{BodlaenderDDFLP16}, that is, single exponential in the treewidth $k$. The best known approximation ratio of a polynomial time approximation algorithm for pathwidth is $O(\sqrt{\log (opt)}\log n)$~\cite{FeigeHL08}.

There exist exact algorithms for computing pathwidth, whose running times are exponential in the order of the input graph.
Pathwidth can be computed in $O^*(2^n)$-time (in $O^*(2^n)$ space) or in $O^*(4^n)$-time with the use of polynomial space, using a simple algorithm from~\cite{BodlaenderFKKT12}.
There is also a faster algorithm with running time $O^*(1.9657^n)$~\cite{SuchanV09}, which has been further improved very recently to $O^*(1.89^n)$~\cite{KitsunaiKKTT12}.
See \cite{BiedlBNNPR13,Coudert16,CoudertMN16} for some experimental approaches to pathwidth computation.

For pathwidth, it is known due to~\cite{RobertsonS04} that the set of minimal forbidden minors (i.e., the obstruction set) is finite for each fixed $k$.
However, a significant difference between pathwidth and connected pathwidth is that the latter one is not closed under taking minors and hence it is not known if the set of minimal forbidden minors for connected pathwidth is finite~\cite{BestGTZ16}.

We also point out that a number of results have been obtained for connected pathwidth or the closely related connected graph (edge) search, including algorithmic and computational ones \cite{BFFFNST12,Dereniowski11,Dereniowski12SIDMA,connected_outerplanar,FN06,Nisse08}, monotonicity \cite{FN08,YDA09}, structural properties \cite{connected_and_internal} or distributed algorithms \cite{BorowieckiDK15,FHL05,INS09}.

\subsection{Motivation} \label{sec:motivation}

The connectivity constraint for pathwidth is natural and useful in graph searching games \cite{BFFFNST12,FHL05,FN06}.
The connectivity is in some cases implied by potential applications (e.g., security constraints may enforce the clean, or safe, area to be connected) or it is a necessity, like in distributed or online versions of the problem~\cite{BlinFNV08,BorowieckiDK15,INS09,NS09}.

Our second motivation comes from connections between pathwidth and connected pathwidth.
More specifically, \cite{Dereniowski12SIDMA} implies that for any graph $G$, these parameters differ multiplicatively only by a small constant.
This implies that an approximation algorithm for connected pathwidth immediately provides an approximation algorithm for pathwidth with asymptotically the same approximation ratio.
This may potentially lead to obtaining better approximations for pathwidth since, informally speaking, the algorithmic search space for connected pathwidth is for some graphs much smaller than that for pathwidth.
On the other hand, we do not know any algorithm computing the connected pathwidth in time $O^*((2-\epsilon)^n)$, for any $\epsilon > 0$. Thus, despite this smaller algorithmic search space, it is not clear how these two problems algorithmically differ in the context of designing exact algorithms.

During the GRASTA 2017 workshop, Fedor V. Fomin \cite{fominQuestion} raised an open question, whether we can verify in polynomial time, if the connected pathwidth of the given graph is at most $k$, for a fixed constant $k$. In this paper we answer this question in the affirmative.

\subsection{Outline} \label{sec:our-work}
In the next section we recall a definition of connected pathwidth and related terms used in this work.
Section~\ref{sec:algorithm} provides a polynomial-time algorithm for determining whether the connected pathwidth of an arbitrary input graph $G$ is at most $k$, where $k$ is a fixed integer. The algorithm is inspired by the algorithms for computing mininum-length path decompositions by Dereniowski, Kubiak, and Zwols~\cite{DERENIOWSKI20151715}, we also use the notation from this paper.
Then, Section~\ref{sec:analysis} contains the analysis of the algorithm (its correctness and running time).
We finish with some open problems in Section~\ref{sec:open-problems}.

\section{Definitions} \label{sec:definitions}
For a simple graph $G=(V(G),E(G))$ and a set $Y\subseteq V(G)$, the subgraph with vertex set $Y$ and edge set $\{\{u,v\}\in E(G)\st u,v\in Y\}$ is denoted by $G[Y]$ and is called the subgraph \emph{induced} by $Y$.
For $Y\subseteq V(G)$, we write $N_G(Y)$ to denote the \emph{neighborhood} of $Y$ in $G$, defined as $N_G(Y)=\{v\in V(G)\setminus Y\st \exists u\in Y \textup{ s.t. }\{u,v\}\in E(G)\}$. 

\begin{definition} \label{def:path_dec}
A \emph{path decomposition} of a simple graph $G=(V(G),E(G))$ is a sequence $\cP=(X_1,\ldots,X_l)$, where $X_i\subseteq V(G)$ for each $i\in\{1,\ldots,l\}$, and
\begin{enumerate}[label={\normalfont(\roman*)},align=left,leftmargin=*]
 \item\label{it:path1} $\bigcup_{i=1}^{l} X_i=V(G)$, 
 \item\label{it:path2} for each $\{u,v\}\in E(G)$ there exists $i\in\{1,\ldots,l\}$ such that $u,v\in X_i$,
 \item\label{it:path3} for each $i,j,k$ with $1\leq i\leq j\leq k\leq l$ it holds that $X_i\cap X_k\subseteq X_j$.
\end{enumerate}
The \emph{width} of a path decomposition $\cP$ is $\width{\cP}=\max_{i\in\{1,\ldots,l\}}|X_i|-1$.
The \emph{pathwidth} of $G$, denoted by $\pathwidth{G}$, is the minimum width over all path decompositions of $G$.
\end{definition}

We say that a path decomposition $\cP=(X_1,\ldots,X_l)$ is \emph{connected} if the subgraph $G[X_1\cup\cdots\cup X_i]$ is connected for each $i\in\{1,\ldots,l\}$. A \emph{connected pathwidth} of a graph $G$, denoted by $\cpw{G}$, is the minimum width taken over all connected path decompositions of $G$.

Finally, a \emph{connected partial path decomposition} of a graph $G$ is a connected path decomposition $(X_1,\ldots,X_i)$ of some subgraph $H$ of $G$, where $N_G(V(G)\setminus V(H))\subseteq X_i$.
In other words, in the latter condition we require that each vertex of $H$ that has a neighbor outside $H$ belongs to the last bag $X_i$.
Intuitively, a connected partial path decomposition of $G$ can be potentially a prefix of some connected path decomposition of $G$.
Also note that if $(X_1,\ldots,X_l)$ is a connected path decomposition, then for each $i\in\{1,\ldots,l\}$, $(X_1,\ldots,X_i)$ is a connected partial path decomposition of $G$, where $H=G[X_1\cup\cdots\cup X_i]$.

In our analysis we will use an intermediate notion between arbitrary and connected path decompositions.
For a path decomposition $\cP=(X_1,\ldots,X_l)$ of $G$ and $\cI\subseteq V(G)$, we say that it is \emph{$\cI$-connected} if for each $i\in\{1,\ldots,l\}$ each connected subgraph of $G[X_1\cup\cdots\cup X_i]$ contains a vertex from $\cI$.
In other words, if one takes a prefix of $\cP$, then the subgraph of $G$ induced by the vertices of this prefix may have several connected components and each of them must have a vertex in $\cI$.
Note that if $\cP$ is connected then there is only one such connected component.

\section{The algorithm} \label{sec:algorithm}
Note that we may without loss of generality assume that the first bag in a connected path decomposition to be computed has only one vertex.
In the remaining part of the paper we denote this vertex by $\startv$. For the time being we may assume that $\startv$ is known to the algorithm because checking all possible values of $\startv\in V(G)$ increases the complexity only by a factor of $n$.
We will present an algorithm  deciding whether, for an input graph $G$ and $\cI=\{\startv\}$, there exists an $\cI$-connected path decomposition of width $k-1$.
Our algorithm is recursive so for subsequent recursive calls all three parameters are different.
We will consider width $k-1$ so that the sets we have to handle will have size at most $k$.
This modification in which we consider the `starting bag' of a decomposition is dictated by the fact that our algorithm is recursive and on subsequent levels of the recursion the content of the first bag cannot be arbitrary in connected path decompositions (as opposed to just path decompositions).

In the following by $G$ we denote the input graph with $n$ vertices, $\cI$ and $k$ are also fixed.
In the rest of the work, $k$ refers to the maximum bag size of the connected path decomposition to be computed (thus, the algorithm checks whether $\cpw{G}\leq k-1$ for the input graph $G$).

For any $S\subseteq V(G)$, we say that a subgraph $H$ of $G$ is an \emph{$S$-branch} if $H$ is a connected component of $G-S$ and $N_G(V(H))=S$. For any $S\subseteq V(G)$, define $\branches{S}$ to be the set of all $S$-branches. 
A set $S$ is called a \emph{bottleneck} if the number of $S$-branches is at least $2k+1$, as it guarantees us the existence of at least one special branch called an \emph{in-branch}, which will be defined formally in the next Section.
Observe that each connected component of $G-X$, for any $X\subseteq V(G)$, is an $S$-branch for exactly one non-empty subset $S$ of $X$.

Let us mention that $S$-branches are also known as \emph{full components associated with $S$} (see e.g. Bouchitt{\'e} and Todinca~\cite{DBLP:journals/siamcomp/BouchitteT01}).

\subsection{States}

By a \emph{potential state} we mean a triple $(X,\{B_S\}_{S \subseteq X}, \{f^B_S \}_{S \subseteq X})$, consisting of:
\begin{itemize}
\item a non-empty set $X \subseteq V(G)$ with $\card{X} \leq k$,
\item a subset $B_S \subseteq \branches{S}$ of cardinality at most $2k$, chosen for every non-empty $S \subseteq X$,
\item a function $f^B_S \colon \branches{S} \to \{0,1\}$, chosen for every non-empty $S \subseteq X$. We additionally require that if $H,H' \in \branches{S} \setminus B_S$, then $f^B_S(H)=f^B_S(H')$.
\end{itemize}
The exact meaning of $B_S$ will be explained later on, but let us present some intuition. In Lemma~\ref{lem:small} we show that for every path decomposition $\cP$, the vertices of all but at most $2k$ $S$-branches appear in bags of $\cP$ in a certain, well-structured way. The set $B_S$ and the function $f^B_S$ will used to describe the structure of the remaining, badly behaving $S$-branches.

Observe that the set $X$ may be chosen in at most $n^k$ ways and the number of choices of $S$ is at most $2^k$. For every $S$, the number of $S$-branches is at most $n$, so $B_S$ can be chosen in at most $n^{2k}$ ways. The function $f^B_S$ can be chosen in at most $2^{|B_S|} \cdot 2 \leq 2^{2k+1}$ ways. Therefore the number of potential states is at most $n^k  \cdot 2^k \cdot n^{2k} \cdot 2^{2k+1}= O(n^{3k})$, i.e., polynomial in $n$, where the asymptotic notation hides the factor that depends on $k$.

With a potential state $s=(X,\{B_S\}_{S\subseteq X},\{f^B_S\}_{S \subseteq X})$ we associate the following notions.
By $\bag{s}$ we denote the set $X$.
By $\cover{s}$ we denote the set of vertices
\[X \cup \bigcup_{S \subseteq X, S \neq \emptyset} \left( \bigcup_{H \in \branches{S} \colon f_S^B(H)=1} V(H) \right).\]
By $G_s$ we denote the subgraph of $G$ induced by $\cover{s}$.
We say that two states $w,s$ are \emph{indistinguishable} if $\cover{w}=\cover{s}$ and $\bag{w} = \bag{s}$. Otherwise states are \emph{distinguishable}.
We note that for such two distinguishable states it may hold that $\cover{w}=\cover{s}$ but $\bag{w}\neq\bag{s}$.

Let $v$ be a vertex from $\cover{s}$ which has a neighbor $u \notin \cover{s}$.
We argue that $v\in\bag{s}$.
Otherwise, if $v \notin \bag{s}$, then both $v$ and $u$ belong to the same $S$-branch for some $S \subseteq \bag{s}$. Thus, they are either both in $\cover{s}$, or outside it. From this it follows that every vertex $v \in \cover{s}$, which has a neighbor $u \notin \cover{s}$, must belong to $\bag{s}$. Let us denote the set of such vertices $v$ by $\border{s}$.
We have proved:
\begin{observation} \label{obs:cover}
For each potential state $s$ it holds that $\border{s}\subseteq\bag{s}$.
\end{observation}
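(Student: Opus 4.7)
The plan is a short argument by contradiction. Suppose some $v \in \border{s}$ violates the claim, that is, $v \in \cover{s}$ has a neighbor $u \notin \cover{s}$ but $v \notin \bag{s}$. I will derive a contradiction by locating $u$ inside $\cover{s}$, which means I need to exhibit a non-empty $S \subseteq \bag{s}$ and an $S$-branch $H$ with $f^B_S(H)=1$ such that $u \in V(H)$.

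First, I would unfold the definition of $\cover{s}$ applied to $v$. Since $v \in \cover{s} \setminus \bag{s}$, the only way $v$ can belong to $\cover{s}$ is via the second part of the union, so there must exist a non-empty $S \subseteq \bag{s}$ and an $H \in \branches{S}$ with $f^B_S(H) = 1$ such that $v \in V(H)$. I then observe that $\bag{s} \subseteq \cover{s}$ holds by definition of $\cover{s}$, so from $u \notin \cover{s}$ we immediately get $u \notin \bag{s}$, and in particular $u \notin S$.

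Next, I would use the fact that an $S$-branch is a full connected component of $G - S$ to push $u$ into the same branch as $v$. Since $u \in V(G) \setminus S$ is adjacent to $v \in V(H)$ and $H$ is a connected component of $G - S$, $u$ must itself lie in $V(H)$. But then $f^B_S(H) = 1$ places $u$ in $\cover{s}$, contradicting the choice of $u$. This gives the desired contradiction, so every $v \in \border{s}$ lies in $\bag{s}$.

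The only subtle point, and the main thing to be careful about, is remembering that $S$-branches are defined as entire connected components of $G - S$ whose neighborhood is exactly $S$: an outside neighbor of a vertex of $H$ cannot escape to a different component of $G - S$ without first passing through $S$. Once this is clear, the proof is essentially immediate from unfolding the definitions of $\border{s}$ and $\cover{s}$.
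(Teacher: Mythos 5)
Your proof is correct and follows essentially the same route as the paper's inline argument: assume $v \in \cover{s} \setminus \bag{s}$ has a neighbor $u \notin \cover{s}$, note that $u$ and $v$ must then lie in the same $S$-branch, and derive the contradiction $u \in \cover{s}$. The only (harmless) difference is that you pick the witnessing pair $(S,H)$ with $f^B_S(H)=1$ directly from the definition of $\cover{s}$, which makes the step ``both in or both out'' fully explicit.
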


We say that a potential state $s$ is a \emph{state} if each connected component of $G_s$ contains a vertex in $\cI$.

We introduce a boolean table $Tab$, indexed by all states.
For a state $s$, the value of $Tab[s]$ will be set to $true$ by our algorithm if and only if there exists some $\cI$-connected path decomposition $\cP=(X_1,X_2,\ldots,X_l)$ of $G_s$, such that $\width{\cP}$ is at most $k-1$, $\cP$ starts with $\cI$ and $X_l = \bag{s}$. We will use a dynamic programming to fill out the table $Tab$. Then, we will conclude $\cpw{G} \leq k-1$ (with the additional constraint concerning $\cI$) if and only if $Tab[s] = true$ for some state $s$ with $\cover{s} = V(G)$. 

Observe that such a final state exists, since for $s = (X, \{B_S\}_S, \{f^B_s\}_S)$, we have $\cover{s}=V(G)$ if and only if $f^B_S(H)=1$ for every $S$ and $H$.
However, the astute reader may notice that in our representation we might have not included some $\cI$-connected partial path decompositions and it could be possible that we do not find a solution, even though it exists. We will show that if $\cpw{G} \leq k-1$, then there exists a special type of a $\cI$-connected path decomposition of width at most $k-1$, called a \emph{structured path decomposition} (defined later), which can be found using our algorithm because, as we will argue, our table $Tab$ does not `omit' any structured path decompositions.

\subsection{Extension rules}

Let us introduce a total ordering $\precs$ on the set of states. We say that $w \precs s$ if $|\cover{w}| < |\cover{s}|$, or $|\cover{w}|=|\cover{s}|$ and $|\bag{w}| > |\bag{s}|$. If $|\cover{w}| = |\cover{s}|$ and $|\bag{w}| = |\bag{s}|$, then we resolve such a tie arbitrarily.

In our dynamic programming, we process states according to the ordering $\precs$ and fill the table $Tab$ using two extension rules: \emph{step extension} and \emph{jump extension}.
 
\begin{description}
\item[step extension] for distinguishable states $w, s$ with $s \prec_s w$ and $|\cover{s}|>1$: if $Tab[w] = true$ and
\begin{enumerate}[label={\normalfont(S\arabic*)},align=left,leftmargin=*]
\item\label{step:1} each connected component of $G[\bag{s}]$ contains a vertex from $\bag{w}$,
\item\label{step:2} $\border{w}\subseteq\bag{s}$,
\item\label{step:3} $\cover{s} = \cover{w} \cup \bag{s}$,
\item\label{step:4} $\bag{s} \cap \cover{w} \subseteq \bag{w}$,
\end{enumerate}
then set $Tab[s]$ to $true$,
\item[jump extension] for distinguishable states $w = (X,\{B_S\}_{S\subseteq X},\{f^B_S\}_S)$ and $s = (X,\{B_S\}_{S\subseteq X},\{g^B_S\}_S)$ with $s \prec_s w$ and $|\cover{s}|>1$: if $Tab[w] = true$ and there exists a bottleneck set $S' \subseteq X$, such that
\begin{enumerate}[label={\normalfont(J\arabic*)},align=left,leftmargin=*]
\item\label{jump:green} $f^B_{S'}(H) = 0$ and $g^B_{S'}(H) = 1$ for every $H \in \branches{S'} \setminus B_{S'}$,
\item\label{jump:otherS} $f^B_{S'}(H) = g^B_{S'}(H)$ for every $H \in B_{S'}$,
\item\label{jump:other} $f^B_{S}(H) = g^B_{S}(H)$ for every non-empty $S \neq S'$, $S\subseteq X$, and $H \in \branches{S}$,
\item\label{jump:rec} for each $H \in \branches{S'} \setminus B_{S'}$ there exists a $N_G(S')\cap V(H)$-connected path decomposition $\cP_H$ of $H$ of width at most $k-|X|-1$,
\end{enumerate}
then set $Tab[s]$ to $true$.
\end{description}

Let us present some intuitions behind these extension rules.
In step extension, if $w$ corresponds to some $\cI$-connected partial path decomposition $\cP = (X_1,X_2,\ldots,X_l=\bag{w})$, then $s$ corresponds to an $\cI$-connected partial path decomposition $\cP' = (X_1,X_2,\ldots,X_l,X_{l+1}=\bag{s})$ (we extend $\cP$ by adding a single bag, namely $\bag{s}$).
Also note, that each new vertex in $X_{l+1}$, i.e. one that is not in $X_l$, has (due to~\ref{step:1}) a neighbor in $X_l$, as required in connected path decompositions.

In jump extension, if the state $w$ corresponds to some $\cI$-connected partial path decomposition $\cP = (X_1,X_2,\ldots,X_l)$, then the state $s$ corresponds to a $\cI$-connected partial path decomposition $\cP' = (X_1,X_2,\ldots,X_l,X_{l+1},X_{l+2},\ldots,X_{l + l'})$, where
\begin{itemize}
\item $\bigcap_{i=l}^{l+l'}X_i = X_l = X_{l+l'}$,
\item $\cP'' := (X_{l+1} \setminus X_l, X_{l+2} \setminus X_l,\ldots, X_{l+l'} \setminus X_l)$  is a (not necessarily connected) path decomposition of the graph induced by $S'$-branches, for some $S' \subseteq X_l$.
These are the $S'$-branches $H$ in~\ref{jump:green} and $\cP''$ is obtained by `concatenating' the path decompositions from~\ref{jump:rec}.
\end{itemize}
Note that although a path decomposition $\cP_H$ in~\ref{jump:rec} may not be connected, we ensure (by definition of $\cI$-connectivity) that each connected component of the subgraph induced by each prefix of $\cP_H$ has a vertex from $N_G(S')\cap V(H)$ and hence it contains a neighbor of $S'$, ensuring required connectivity of the resulting path decomposition.

\subsection{Summing up}
 Let us recall how the algorithm works. We introduce a boolean table $Tab$, indexed by all states. We initialize $Tab$ by setting $Tab[s]=true$ for every state $s$, such that $\cover{s} = \bag{s} = \{ v \}$, for some $v \in \cI$, while for the remaining states $s$ we initialize $Tab[s]$ to be $false$.

Then we process states in the ordering $\precs$, checking whether a step extension or a jump extension can be applied to set $Tab[s]= true$.
We terminate when we find a state corresponding to a feasible solution (i.e., we set $Tab[s]=true$ for some state $s$ with $\cover{s}=V(G)$), or when we process all states, in the latter case we report that a solution does not exist.
 
 We note that the algorithm will be subsequently switching some entries of $Tab$ from $false$ to $true$, and hence until the completion of the algorithm it is understood that the value $false$ of a particular entry of $Tab$ does not provide any information as to whether a corresponding path decomposition exists.

\section{The analysis} \label{sec:analysis}

Let us start by introducing some more definitions and additional notation.
Let $\cP=(X_1,X_2,\ldots,X_l)$ be a path decomposition of $G$.
We say that a connected subgraph $H$ of $G$ is \emph{contained in} an interval $[i,j]$ of $\cP$ for some $1\leq i\leq j\leq l$, if $V(H)\cap X_t\neq\emptyset$ if and only if $t\in\{i,\ldots,j\}$. Note that this definition is valid since it follows that for a connected subgraph $H$, the subset of indices $t$ such that $V(H)\cap X_t\neq\emptyset$ is indeed an interval. If $H$ is contained in $[i,j]$, then we denote these endpoints of the interval as $i=\graphStart{H}{\cP}$ and $j=\graphEnd{H}{\cP}$.
If $\cP$ is clear from the context, we will often write shortly $\graphStartS{H} := \graphStart{H}{\cP}$ and $\graphEndS{H}:=\graphEnd{H}{\cP}$.

For a set  $S$ and an $\cI$-connected path decomposition $\cP=(X_1,X_2,\ldots,X_l)$, we say that an $S$-branch $H$ is an \emph{in-branch} if $S \subseteq X_{\graphStartS{H}}$ and $S \subseteq X_{\graphEndS{H}}$. The lemma below gives us a lower bound for number of in-branches of $S$.

\begin{lemma}\label{lem:small}
For every set $S$ and path decomposition $\cP=(X_1,X_2,\ldots,X_l)$, at most $2k$ $S$-branches are not in-branches.
\end{lemma}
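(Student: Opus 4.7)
The plan is to track, for each vertex $v$, the interval $[l_v, r_v] := [\min\{i : v \in X_i\}, \max\{i : v \in X_i\}]$ of bags containing $v$, and to exploit the basic path-decomposition fact that for every $v \in S$ and every $S$-branch $H$, the intervals $[l_v, r_v]$ and $[\alpha(H), \beta(H)]$ must meet, because $v \in N_G(V(H))$ forces $v$ to share a bag with some vertex of $V(H)$. Setting $a := \max_{v \in S} l_v$ and $b := \min_{v \in S} r_v$ then gives a clean case split into $a \leq b$ (there is a bag containing all of $S$) and $a > b$ (there is not); the trivial case $S = \emptyset$ I would handle separately, and throughout I use the bag-size bound $|X_i| \leq k$.

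In the main case $a \leq b$, every $v \in S$ lies in every bag $X_i$ with $i \in [a,b]$, so in particular $S \subseteq X_a$ and $S \subseteq X_b$. From the interval intersection property I would deduce that every $S$-branch $H$ satisfies $\alpha(H) \leq b$ and $\beta(H) \geq a$, so $H$ fails to be an in-branch precisely when $\alpha(H) < a$ (call it Type A) or $\beta(H) > b$ (Type B). The key is then a standard disjointness trick: for every Type A branch $H$, the bag $X_a$ must contain some vertex of $V(H)$ (since $a$ lies in $H$'s interval), and because distinct $S$-branches are vertex-disjoint and disjoint from $S$, this forces $|X_a| \geq |S| + (\textrm{number of Type A branches})$. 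Hence at most $|X_a| - |S| \leq k - 1$ Type A branches exist; by symmetry the same bound holds for Type B via $X_b$, so non-in-branches number at most $2k - 2 \leq 2k$.

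In the secondary case $a > b$, no bag contains all of $S$, so no $S$-branch is an in-branch at all and I would instead bound the total number of $S$-branches directly. The same interval intersection still yields $\alpha(H) \leq b$ and $\beta(H) \geq a > b$, so $[b,a] \subseteq [\alpha(H), \beta(H)]$ for every $S$-branch $H$. Consequently $X_b$ contains a vertex of every $S$-branch, and by the branch-disjointness this gives at most $|X_b| \leq k$ $S$-branches in total, which is even stronger than the claimed $2k$. The main subtlety I anticipate is recognizing that the single-bag argument of Case 1, which hinges on the existence of a bag containing all of $S$, simply breaks in Case 2; the correct replacement is the observation that every $S$-branch must span the whole gap $[b,a]$ between the late-appearing and the early-disappearing vertices of $S$, which funnels all branches through the common bag $X_b$ and then the bag-size bound finishes the job.
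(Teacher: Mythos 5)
Your proposal is correct and follows essentially the same route as the paper: your indices $a=\max_{v\in S}l_v$ and $b=\min_{v\in S}r_v$ are exactly the paper's critical indices $t$ and $t'$ (the last first-appearance and first last-appearance of $S$), and the counting is identical --- every non-in-branch is forced to place a vertex in $X_a$ or $X_b$, and vertex-disjointness of $S$-branches together with $|X_i|\leq k$ gives the bound. The only differences are cosmetic (your case split on whether $a\leq b$, and the marginally sharper constants $2k-2$ resp.\ $k$), so this matches the paper's argument.
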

\begin{proof} 
Consider an $S$-branch $H$, which is not an in-branch. This means that $S \not\subseteq X_{\graphStartS{H}}$ or $S \not\subseteq X_{\graphEndS{H}}$.

First, consider $H$, such that $S \not\subseteq X_{\graphStartS{H}}$. Let $t$ be the minimum index such that $S \subseteq X_1 \cup X_2 \cup \cdots \cup X_t$. Let $v \in S$ be a vertex in $X_t \setminus (X_1 \cup X_2 \cup \cdots \cup X_{t-1})$, it exists by the definition of $t$.
Recall that $v$ is a neighbor of some vertex $w$ of $H$, so, since $\cP$ is a path decomposition, there must be a bag $X_i$ containing both $v$ and $w$. Since $X_t$ is the first bag, where $v$ appears, we observe that $i \geq t$ and thus $\graphEndS{H} \geq t$.

Suppose now that $\graphStartS{H} > t$. Note that since $S \not\subseteq X_{\graphStartS{H}}$, there is some $u \in S \setminus X_{\graphStartS{H}}$.
Recall that $u$ is a neighbor of some vertex from $H$. However, since $u \in  X_1 \cup X_2 \cup \cdots \cup X_t$ and $u \notin X_{\graphStartS{H}}$, the vertex $u$ does no appear in any bag containing a vertex of $H$, so $\cP$ cannot be a path decomposition of $G$. Thus $\graphStartS{H} \leq t$.

Therefore, the bag $X_{t}$ contains at least one vertex from $H$.
Since $S$-branches are vertex-disjoint and $|X_t|\leq k$, we observe that there are at most $k$ $S$-branches $H$, such that  $S \not\subseteq X_{\graphStartS{H}}$.

Now consider an $S$-branch $H$, such that  $S \subseteq X_{\graphStartS{H}}$ and $S \not\subseteq X_{\graphEndS{H}}$.
Let $t'$ be the maximum index such that $S\subseteq X_{t'}\cup\cdots\cup X_l$.

Analogously to the previous case, we observe that $\graphStartS{H} \leq t'$ (by the maximality of $t'$) and $\graphEndS{H} \geq t'$, because $S \not\subseteq X_{\graphEndS{H}}$.
Thus the bag $X_{t'}$ contains at least one vertex from $H$, which shows that the number of $S$-branches $H$, such that $S \subseteq X_{\graphStartS{H}}$ and $S \not\subseteq X_{\graphEndS{H}}$, is at most $k$.
Therefore the total number of $S$-branches, that are not in-branches is at most $2k$, which completes the proof.
\end{proof}

Recall that a non-empty set $S$ is a bottleneck if $|\branches{S}| > 2k$. Thus Lemma \ref{lem:small} implies the following:

\begin{corollary}\label{cor:bottlenecks}
If $\cP=(X_1,\ldots,X_l)$ is a path decomposition and $S$ is a bottleneck, then the following properties hold:
\begin{enumerate}
\item $S$ has at least one in-branch,
\item there is $i$, such that $S \subseteq X_i$,
\item $\card{S} \leq k$. \qed
\end{enumerate}
\end{corollary}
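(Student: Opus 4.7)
The plan is to derive all three items as quick consequences of Lemma~\ref{lem:small} together with the definition of an in-branch and the fact that bags in the decompositions under consideration have size at most $k$ (since the algorithm searches for width $k-1$, so $|X_i|\leq k$ for every $i$).

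First I would prove item~1. By the definition of a bottleneck, $|\branches{S}|\geq 2k+1$. Lemma~\ref{lem:small} asserts that at most $2k$ of these $S$-branches fail to be in-branches, so at least one $S$-branch is an in-branch, proving item~1.

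Next, for item~2, I would simply unpack the definition of in-branch. Let $H$ be the in-branch given by item~1. Then, by definition, $S\subseteq X_{\graphStartS{H}}$ (and in fact also $S\subseteq X_{\graphEndS{H}}$). Taking $i:=\graphStartS{H}$ yields a bag $X_i$ with $S\subseteq X_i$, which is exactly item~2.

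Finally, item~3 follows immediately from item~2: we have $|S|\leq |X_i|\leq k$, where the last inequality uses that the bags of the decompositions we consider have cardinality at most $k$ (equivalently, width at most $k-1$). The only thing to be careful about is to note explicitly that this bag-size bound is the standing assumption in our setting; no further step is required. I do not expect any real obstacle here, the statement is essentially a direct corollary of Lemma~\ref{lem:small} combined with the in-branch definition.
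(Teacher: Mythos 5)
Your proof is correct and follows exactly the route the paper intends: the corollary is presented there as an immediate consequence of Lemma~\ref{lem:small} (a bottleneck has more than $2k$ branches, so some branch is an in-branch), with item~2 from the definition of in-branch and item~3 from the standing bound $|X_i|\leq k$ on bag sizes. Nothing is missing.
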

These properties justify the following definition.

\begin{definition} \label{def:interval}
For a bottleneck $S \subseteq V(G)$, let $\botStart{S}{\cP}$ (respectively $\botEnd{S}{\cP})$ be the minimum (respectively maximum) index $i$ such that $i=\graphStart{H}{\cP}$ ($i=\graphEnd{H'}{\cP}$, respectively) for some $S$-branch $H$ (respectively $H'$), which is an in-branch.
\end{definition}
Note that the definition of an in-branch implies that $S \subseteq X_{\botStart{S}{\cP}} \cap X_{\botEnd{S}{\cP}}$.
The interval $I(S, \cP) =[\botStart{S}{\cP},\botEnd{S}{\cP}]$ is called the \emph{interval of} $S$.
Again, we will often write shortly $\botStartS{S}$, $\botEndS{S}$, and $I(S)$, if $\cP$ is clear from the context.

For a bottleneck $S$ we can refine the classification of $S$-branches, which are not in-branches.
We say that an $S$-branch $H$, which is not an in-branch, is
\begin{itemize}
\item a \emph{pre-branch} if $\graphStartS{H} < \botStartS{S}$,
\item a \emph{post-branch} if $\graphStartS{H} \geq  \botStartS{S}$ and $\graphEndS{H} > \botEndS{S}$.
\end{itemize}
By $\branchesin{x}{S,\cP}$ we denote the set of all $x$-branches for $S$, where $x \in \{\preB,\inB,\postB\}$.
Again, if $\cP$ is clear from the context, we will write $\branchesin{x}{S}$ instead of $\branchesin{x}{S,\cP}$.
By $\cC(S)$ we denote the set of all connected components of $G - S$, that are not $S$-branches.

\paragraph*{Example.} \textit{The graph $G$ in Figure~\ref{fig:preinpost} illustrates the above concepts.
The sequence $X_1,X_2,\ldots,X_{16}$ forms a connected path decomposition of $G$.
The only bottleneck set $S$ consists of two vertices denoted by circles. (In this example we take $k=3$.) There are $13$ $S$-branches: one pre-branch ($G[X_1 \cup X_2 \setminus S]$), eleven in-branches ($G[X_i \setminus S]$ for $i=4,5,\ldots,14$), and one post-branch $(G[X_{15} \cup X_{16} \setminus S]$).
The interval of $S$ is equal to $I(S,\cP) = [4, 14]$ and the component $G[X_3 \setminus S]$ is not an $S$-branch, because some vertices in $S$ do not have a neighbor in this component.}

\begin{figure}[htb]
\begin{center}
\centering
  \includegraphics[scale=0.9]{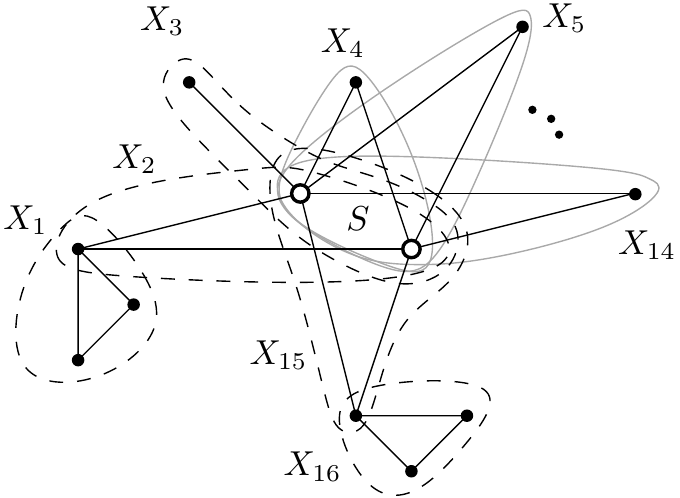}
	\end{center}
\caption{An illustration of two-vertex bottleneck set $S$ and the corresponding $S$-branches.}
\label{fig:preinpost}
\end{figure}

For a subgraph $H$ of $G$, we say that $H$ \emph{waits in} an interval $[i,j]$ of $\cP$ if
\[V(H)\cap X_i=V(H)\cap X_{i+1}=\cdots=V(H)\cap X_j.\]
We say that a path decomposition $\cP$ is $S$-\emph{structured} if each $C$ that is in $C(\cS)$ or is a post- or a pre-branch of $\cS$ waits in $I(S,\cP)$. 
The main technical tool in our approach is the following result concerning the structure of $\cI$-connected path decompositions. The proof of the lemma is provided after giving several technical facts that we need.

\begin{lemma}\label{lem:main_transformation}
If there exists an $\cI$-connected path decomposition $\cP$, then there is also an $\cI$-connected path decomposition $\cP'$ of width at most $\width{\cP}$ such that $\cP'$ is $S$-structured for every bottleneck $S$.
\end{lemma}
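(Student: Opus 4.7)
My plan is to prove the lemma by iterative local surgeries, one per bottleneck of $G$. Starting from $\cP$, I would process the bottlenecks $S$ in order of inclusion of the intervals $I(S, \cP)$ (innermost first). At each step, the current decomposition is replaced by an $\cI$-connected one of width at most $\width{\cP}$ that is additionally $S$-structured. The induction closes provided that each surgery preserves the $S'$-structure already established for $S' \subset I(S, \cP)$, which I would justify via a preparatory claim: any such inner bottleneck $S'$ is ``hosted'' by a single in-branch $H_j$ of $S$, so the new decomposition's portion handling $S'$ sits entirely inside the block devoted to $H_j$ and is unaffected by the surgery on $S$.

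For a single bottleneck $S$, the surgery acts only on $\cP|_{I(S, \cP)}$. By the definition of in-branches together with property~\ref{it:path3}, $S \subseteq X_i$ for every $i \in I(S, \cP)$. Let $H_1, \ldots, H_p$ be the in-branches of $S$ (at least one exists, since $S$ is a bottleneck, cf.\ Lemma~\ref{lem:small}), and let $\cQ_j$ denote the restriction of $\cP$ to the bags meeting $V(H_j)$; this $\cQ_j$ is a path decomposition of $H_j$ whose endpoint bags contain $S$. My proposed replacement for $\cP|_{I(S, \cP)}$ would consist of two types of blocks: (a) an \emph{in-branch core} obtained by concatenating $\cQ_1, \cQ_2, \ldots, \cQ_p$ with $S$ and a frozen set $F$ of straddling non-in-branch vertices added to every bag; and (b) two \emph{outside blocks}, one spliced immediately before the core and one immediately after, obtained from the bags of $\cP|_{I(S, \cP)}$ (with the in-branch vertices stripped out) by separating them according to whether they carry pre-branch / left-leaning $\cC(S)$-components or post-branch / right-leaning $\cC(S)$-components. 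This leaves every non-in-branch component either fully processed outside the new interval $I(S, \cP')$, giving empty intersection with every interior bag, or fully frozen as part of $F$ with the same intersection in every interior bag --- either way, the waiting condition is satisfied.

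The main obstacle I expect is simultaneously maintaining the width bound, path-decomposition property~\ref{it:path3} at all the seams, and $\cI$-connectivity of prefixes. For width, the critical resource is the $k-|S|$ slots per bag in the in-branch core; these must accommodate both $F$ and the current content of the active $\cQ_j$ bag. I would argue that the total size stays bounded by $k$ by choosing $F$ to be exactly the non-in-branch vertices that already sat in every original bag of $I(S, \cP)$ --- those bags had size at most $k$ while already containing $S$, the in-branch content, and these straddlers. For property~\ref{it:path3}, the only cross-block shared vertices are $S \cup F$, which by construction appear in every bag of the core. $\cI$-connectivity is preserved because $\startv \in \cI$ lies outside every in-branch (the singleton bag $X_1 = \{\startv\}$ precedes $\botStartS{S}$), and each connected component of any prefix of the modified decomposition either matches a component of the corresponding prefix of $\cP$ (outside the core) or extends one by $S$ together with in-branch vertices of some $H_j$, which attach to the existing prefix through $S$.
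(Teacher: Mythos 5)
Your proposal has two genuine gaps, and the first one concerns exactly the part of the lemma that carries all the difficulty: preserving the structure of already-processed bottlenecks. Your preparatory claim --- that an inner bottleneck $S'$ with $I(S',\cP)\subsetneq I(S,\cP)$ is ``hosted'' by a single in-branch of $S$, so the surgery on $S$ leaves its portion untouched --- is false in general. If $S'\subsetneq S$, then by Lemma~\ref{lem:subsets_bottlenecks} all $S$-branches lie inside \emph{one} $S'$-branch, and every other $S'$-branch is a non-branch component of $G-S$; hence the in-branches of $S'$ that define $I(S')$, and the pre-/post-branches of $S'$ that must keep waiting there, are precisely part of the non-in-branch material that your surgery on $S$ strips out, splits into left/right blocks and refreezes. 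The same happens when $S$ and $S'$ are incomparable and the component $H$ of $G-S$ with $S'\subseteq S\cup V(H)$ (Lemma~\ref{lem:one_component}, Remark~\ref{rem:all_branches}) is a pre-, post- or non-branch component of $G-S$. In these cases your outer surgery can tear $I(S')$ apart (e.g.\ send two in-branches of $S'$ to opposite sides of the core), destroying $S'$-structuredness. This is what the paper's Lemmas~\ref{lem:final_structured} and~\ref{lem:final_structured2} are for, and the resolution there is not containment inside an in-branch block: it is that the splice position of the outer transformation provably avoids the inner interval (this needs the padding bags \eqref{eq:minimal1} and \eqref{eq:minimal2} inserted by the \emph{earlier} surgery), after which the two intervals become disjoint rather than nested. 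Note also that ``innermost first'' is not well defined on an arbitrary initial $\cP$: before any structuring, two bottleneck intervals may partially overlap (well-nestedness in Lemma~\ref{lem:structured} requires structuredness as a hypothesis), and the intervals move after each surgery.

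The single-bottleneck surgery itself also has a flaw. Separating the stripped interval bags into a left block and a right block breaks condition~\ref{it:path3}: a component of $G-S$ can meet bags both before $\botStart{S}{\cP}$ and after $\botEnd{S}{\cP}$ (a pre-branch may end after $\botEnd{S}{\cP}$, and a member of $\cC(S)$ may span the whole interval); whichever side you assign it to, its vertices reappear in the untouched prefix or suffix on the other side, and your frozen set $F$ (vertices lying in \emph{every} interval bag) is too small to cover the offending vertices. The paper avoids this precisely by \emph{not} reordering the non-in-branch bags: the in-branch core is inserted at a single position $c_{\min}$ of minimum non-in-branch load and the frozen set is $X^*=X_{c_{\min}}$ restricted to non-in-branch vertices, so any vertex occurring on both sides of the splice is automatically frozen, by condition~\ref{it:path3} applied to the original $\cP$. (Your width argument, by contrast, is fine and even simpler than the paper's.) Finally, your $\cI$-connectivity argument assumes $\cI=\{\startv\}$ and $X_1=\{\startv\}$; the lemma is invoked recursively with $\cI=N_G(S')\cap V(H)$, where $\cI$ may intersect in-branches, so that argument does not apply as stated.
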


We define $\cMin[S,\cP]$ as the minimum index $i \in I(S,\cP)$, for which the size of the set
\[X_{i} \cap \left( \bigcup_{H \in \branchesin{\preB}{S} \cup \branchesin{\postB}{S} \cup \cC(S)} V(H) \right)\]
is minimum. Also, set $X^*:=\left( \bigcup_{H \in \branchesin{\preB}{S} \cup \branchesin{\postB}{S} \cup \cC(S)} V(H) \right) \cap X_{\cMin[S,\cP]}$ to be this minimum-size set.

Let $\Gamma$ be the set of all $\cI$-connected path decompositions of $G$ and $\cS$ be the set of all bottlenecks of $G$.
We will define a function $F : \cS \times \Gamma \rightarrow \Gamma$, which for given $S \in \cS$ and $\cP \in \Gamma$ transforms $\cP$ into an $S$-structured $\cI$-connected path decomposition of width at most $\width{\cP}$.
For simplicity of notation, from now on $t_1 = \botStart{S}{\cP}$, $t_2 = \botEnd{S}{\cP}$, and $\cMin = \cMin[S,\cP]$, whenever $S$ and $\cP$ are clear from the context.
Let $\branchesin{\inB}{S,\cP} = \{H_1,\ldots, H_{l_{\inB}}\}$, where the in-branches are ordered according to their first occurrences in $\cP$, that is, $\graphStartS{H_1} \leq \cdots \leq \graphStartS{H_{l_{\inB}}}$.
Let
\[d= \sum\limits_{H \in \branchesin{\inB}{S}} \left(\graphEndS{H} - \graphStartS{H} + 1\right).\]
In other words, $d$ is the sum of lengths of intervals in which the in-branches $H$ are contained in the path decomposition $\cP = (X_1,X_2,\ldots,X_l)$.

For each in-branch $H_i$, for $i\in\{1,\ldots,l_{\inB}\}$, we define the following sequence:
\[\cP_i: = \left(X_{\graphStartS{H_i}}\cap V(H_i),\ldots, X_{\graphEndS{H_i}}\cap V(H_i)\right).\]
Since $\cP$ is an $\cI$-connected path decomposition of $G$, it is straightforward to observe that $\cP_i$ is a $N_G(S)\cap V(H_i)$-connected path decomposition of $H_i$. We will denote the elements of $\cP_i$ by
\[\cP_i=\left(X_1^i,\ldots, X^i_{\graphEndS{H_i} - \graphStartS{H_i} + 1} \right).\]

Then, let us define a sequence $\cP^*$ as follows:
\[\cP^* := \prod_{i=1}^{l_{\inB}} \; \cP_i,\]
where $\prod$ denotes the concatenation of sequences. Observe that the length of $\cP^*$ is exactly $d$. We will denote the elements of $\cP^*$ by $\cP^*=\left(X_{1}^*,\ldots,X^*_{d}\right)$.

Define $B^{\inB}$ to be the set of vertices of the in-branches of $S$, i.e., $B^{\inB} := \bigcup_{H \in \branchesin{\inB}{S,\cP}} V(H)$. Now, we define a path decomposition $F(S,\cP) = \left(X'_1,\ldots,X'_{l + d + 1}\right)$ as follows:
\begin{numcases}{X_i'=}
X_i	& for $i \in \{1,\ldots, t_1 - 1\}$; \label{eq:prefix}\\
X_i \setminus B^{\inB}		& for $i\in\{t_1,\ldots, \cMin - 1\}$; \label{eq:tillcmin}\\
X_{\cMin}\setminus B^{\inB} & for $i = \cMin$; \label{eq:minimal1}\\
X^* \cup X_{i - \cMin}^*\cup S		 		& for $i\in\{\cMin + 1,\ldots, \cMin + d\}$; \label{eq:waiting}\\
X_{\cMin}\setminus B^{\inB} & for $i = \cMin + d + 1$;\label{eq:minimal2}\\
X_{i - d - 1}\setminus B^{\inB}	& for $i\in\{\cMin + d + 2,\ldots, t_2 + d + 1\}$; \label{eq:aftercmin}\\
X_{i - d - 1}										& $i\in\{t_2 + d + 2 ,\ldots, l + d + 1\}$. \label{eq:suffix}
\end{numcases}
Observe that $F(S,\cP)$ is obtained from $\cP$ by a modification of the interval $[t_1,t_2]$ of $\cP$. The prefix $(X_1,X_2,\ldots,X_{t_1-1})$ and the suffix $(X_{t_2+1},X_{t_2+2},\ldots,X_{l})$ of $\cP$ are just copied into $F(S,\cP)$ without any changes (see conditions \eqref{eq:prefix} and \eqref{eq:suffix}).
All components of $G-S$, apart from the in-branches, are covered by bags on positions up to $\cMin-1$, see~\eqref{eq:tillcmin}, and after position $\cMin+d + 1$, see~\eqref{eq:aftercmin}, and they wait for $d + 2$ steps in the interval $[\cMin,\cMin+d + 1]$ of $F(S,\cP)$ --- see~\eqref{eq:minimal1}--\eqref{eq:minimal2}.
The interval $[\cMin +1,\cMin+d]$ is used in~\eqref{eq:waiting} to cover all in-branches, one by one, in the order of their appearance in $\cP$. Additional two bags without any vertices of in-branches are added on positions $\cMin$ and $\cMin + d + 1$ to ensure that for any other bottleneck $S'$, such that $I(S,\cP) \subsetneq I(S',\cP)$, we have that $\cMin(S',\cP) \not\in I(S,\cP)$, see~\eqref{eq:minimal1} and~\eqref{eq:minimal2}. See Figure~\ref{fig:trans1} that illustrates the conversion from $\cP$ to $F(S,\cP)$ for a bottleneck $S$.
Notice that the interval of $S$ in $\cP'$ is given by $\botStart{S}{\cP'} = \cMin + 1$  and $\botEnd{S}{\cP'} = \cMin + d$, and it is possibly different than  $[t_1,t_2]$.
In the next lemma we show that $F(S,\cP)$ has all necessary properties.

  \begin{figure}[htb]
\centering
\begin{subfigure}{\textwidth}
  \centering
  \includegraphics[width=0.55\textwidth]{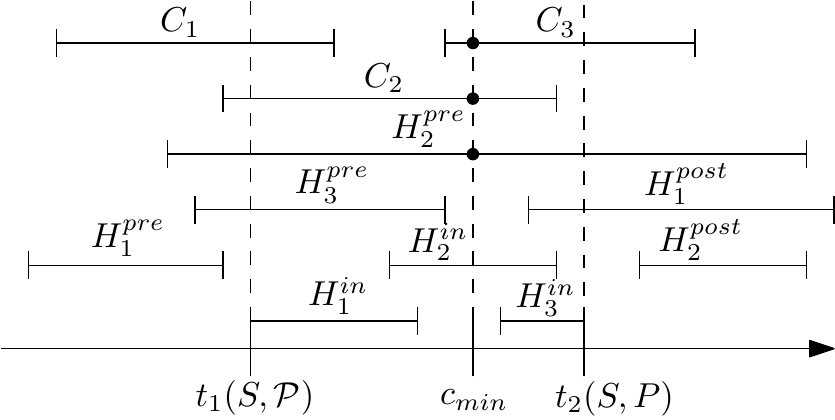}
  \caption{$S$-branches for a path decomposition $\cP$ before transformation.}
  \label{fig:trans1b}
\end{subfigure}
\begin{subfigure}{\textwidth}
  \centering
  \includegraphics[width=0.85\textwidth]{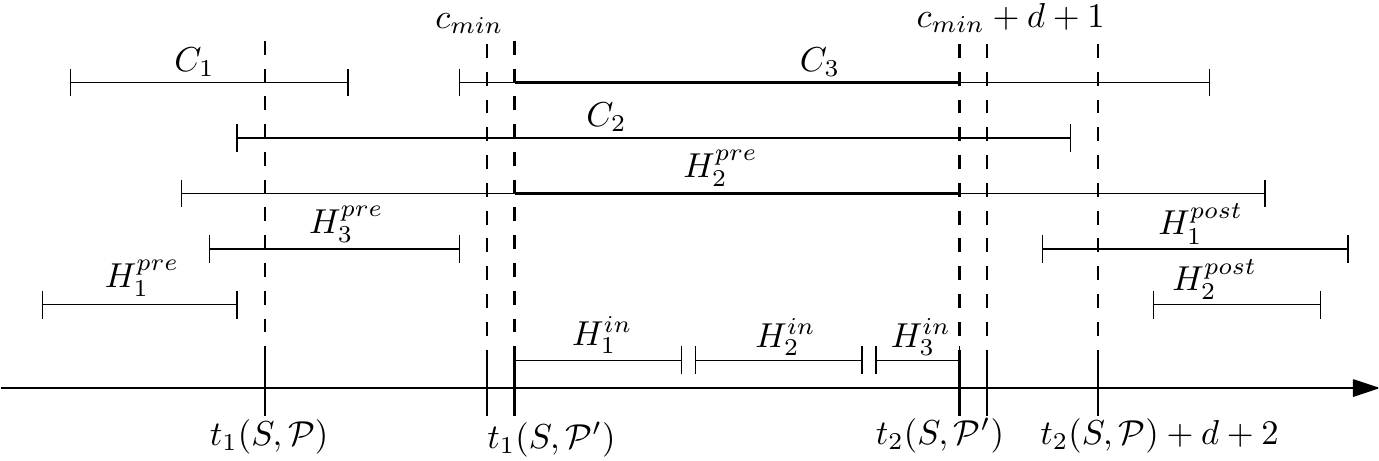}
  \caption{New bags are inserted into $\cP$ in which all components apart from in-branches wait in the interval of $S$ in $\cP'$,  i.e., $C_2,C_3,H_2^{\preB}$ wait in $[\botStart{S}{\cP'}, \botEnd{S}{\cP'}]$.}
  \label{fig:trans1a}
\end{subfigure}
\caption{Illustration of the conversion from $\cP$ to $\cP' = F(S,\cP)$ for a bottleneck $S$. For simplification, $S$ is assumed to only have $8$ $S$-branches and $3$ components, which are denoted by letters $H$ and $C$, respectively, with appropriate indices.}
\label{fig:trans1}
\end{figure}

\begin{lemma}\label{lem:reorganizing}
For any $\cI$-connected path decomposition $\cP$ and a bottleneck $S$, $\cP' = F(S,\cP)$ is an $\cI$-connected path decomposition with $\width{\cP'} \leq \width{\cP}$.
\end{lemma}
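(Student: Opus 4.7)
The plan is to verify separately that $\cP' = F(S, \cP)$ satisfies the three path-decomposition axioms, is $\cI$-connected, and has width at most $\width{\cP}$, handling each of the seven bag families \eqref{eq:prefix}--\eqref{eq:suffix} in turn. Throughout, I would partition $V(G)$ into the three disjoint classes $S$, the in-branch vertices $B^{\inB}$, and the ``rest'' $R := \bigcup_{H \in \branchesin{\preB}{S} \cup \branchesin{\postB}{S} \cup \cC(S)} V(H)$, so that every bag of $\cP'$ decomposes cleanly into a union of pieces from these classes.

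The path-decomposition axioms and the width bound are essentially mechanical. Vertex and edge coverage: the prefix \eqref{eq:prefix} and suffix \eqref{eq:suffix} are copied unchanged from $\cP$; each in-branch $H_j$ is fully laid out by $\cP_j$ inside the waiting bags \eqref{eq:waiting}; every edge $\{u,v\}$ with $u \in V(H_j)$ and $v \in S$ is covered because every bag of \eqref{eq:waiting} containing a vertex of $V(H_j)$ also contains $S$; and all remaining edges appear in copies of $X_i$ or $X_i \setminus B^{\inB}$ whose endpoints lie outside $B^{\inB}$. The interval property is checked class by class: a vertex of $S$ keeps its original $\cP$-interval (which contains $[t_1,t_2]$) stretched by $d+1$ positions during the waiting phase; an in-branch vertex appears only in the consecutive positions of $\cP^*$ inherited from the interval property of $\cP_j$; and a rest vertex appears in the image of its original interval under the $d+1$-slot expansion, extended through the waiting phase exactly when $\cMin$ lies in its original interval (equivalently, when it belongs to $X^*$), so occurrences remain contiguous. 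For the width bound, only the bags \eqref{eq:waiting} require a nontrivial estimate: writing $X^*_{i-\cMin} = X_{t'} \cap V(H_j)$ for the appropriate $t' \in I(S)$, the pairwise disjointness $X^* \subseteq R$, $X^*_{i-\cMin} \subseteq B^{\inB}$, $S$ gives $|X'_i| = |X^*|+|X^*_{i-\cMin}|+|S|$, and combining $|X^*_{i-\cMin}| \le |X_{t'} \cap B^{\inB}|$, $|S| = |X_{t'} \cap S|$, and the minimality in the definition of $\cMin$ (which forces $|X^*| \le |X_{t'} \cap R|$) yields $|X'_i| \le |X_{t'}| \le \width{\cP}+1$.

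The main difficulty is $\cI$-connectivity, which I would verify prefix by prefix. For prefixes of $\cP'$ of length less than $t_1$, the induced vertex set equals the corresponding prefix vertex set of $\cP$, so the property is inherited. For prefixes in \eqref{eq:tillcmin}--\eqref{eq:minimal1}, where the vertex set $\mu_i$ equals $(X_1 \cup \cdots \cup X_i) \setminus B^{\inB}$, the key structural fact is that any in-branch vertex's external neighborhood lies in $S \subseteq \mu_i$, and the $N_G(S) \cap V(H_p)$-connectivity of each $\cP_p$ forces every connected piece of $V(H_p) \cap (X_1 \cup \cdots \cup X_i)$ to carry a neighbor in $S$; consequently, removing $B^{\inB}$ destroys only connected components of $G[X_1 \cup \cdots \cup X_i]$ that were entirely inside $B^{\inB}$, and every remaining component inherits an $\cI$-witness from $\cP$ (after possibly tracing through an $S$-neighbor). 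For prefixes in the waiting phase \eqref{eq:waiting}, each newly added in-branch sub-bag attaches through $S$ to the already $\cI$-covered $S$-containing component, so no $\cI$-free component is created. For prefixes in \eqref{eq:minimal2}--\eqref{eq:suffix}, $\mu_i = (X_1 \cup \cdots \cup X_{i-d-1}) \cup B^{\inB}$, and the components of $G[\mu_i]$ are obtained from those of $G[X_1 \cup \cdots \cup X_{i-d-1}]$ by possibly merging some through the newly-present in-branches, preserving $\cI$-coverage. The most delicate step is the $\cI$-connectivity check for \eqref{eq:tillcmin}--\eqref{eq:minimal1}, where one must combine the $\cI$-connectivity of $\cP$, the $N_G(S) \cap V(H_p)$-connectivity of each sub-decomposition $\cP_p$, and the structural role of $\cMin$ to show that removing $B^{\inB}$ never orphans an $\cI$-vertex.
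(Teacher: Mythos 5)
Your handling of the three path-decomposition axioms and of the width bound is essentially the paper's own argument: the same case analysis over the copied prefix/suffix, the bags with $B^{\inB}$ removed, and the waiting bags \eqref{eq:waiting}, and the same counting $\card{X'_i}=\card{S}+\card{X^*}+\card{X^*_{i-\cMin}}\leq\card{X_{t'}}$ for a suitable $t'\in I(S,\cP)$, using the minimality defining $\cMin$ and the fact that $S\subseteq X_{t'}$ throughout $I(S,\cP)$. That part is fine.

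The gap is exactly at the step you yourself call most delicate. For a prefix of $\cP'$ ending at a position $i\in\{t_1,\ldots,\cMin\}$ the covered vertex set is $(X_1\cup\cdots\cup X_i)\setminus B^{\inB}$, and you assert that deleting $B^{\inB}$ destroys only components lying entirely inside $B^{\inB}$, every surviving component inheriting an $\cI$-witness. The fact you invoke --- that every piece of an in-branch present in a prefix of $\cP$ has a neighbour in $S$ --- only shows that the \emph{deleted} vertices were attached to $S$; it does not show that the \emph{surviving} vertices remain attached to a component containing a vertex of $\cI$. A vertex $b\in S$ (or a pre-/post-branch or a member of $\cC(S)$ whose only neighbours outside it are such vertices $b$) may, inside the prefix $X_1\cup\cdots\cup X_i$ of $\cP$, be joined to the $\cI$-witness of its component exclusively through in-branch vertices, which are precisely the vertices you remove. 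Concretely: take $\cI=\{s^*\}$, $S=\{a,b\}$ with $a$ adjacent to $s^*$, $b$ adjacent to neither $a$ nor $s^*$, and $2k+1$ single-vertex $S$-branches, each adjacent to both $a$ and $b$, placed one per bag after the bag $\{s^*,a\}$. Then every $S$-branch is an in-branch, $\cMin=t_1$ and $X^*=\emptyset$, and in $F(S,\cP)$ the prefix ending at position $\cMin$ induces the vertex set $\{s^*,a,b\}$, in which $\{b\}$ is a component with no vertex of $\cI$, even though the original $\cP$ was ($\cI$-)connected. So the claim in your third paragraph cannot be proved by the argument you sketch, nor repaired by merely adding detail; closing it requires a genuinely different idea about how $S$ and the non-in-branch components reach $\cI$ without passing through in-branch vertices. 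For comparison, the paper's proof of Lemma~\ref{lem:reorganizing} treats $\cI$-connectivity only by observing that the first bag is unchanged ($X'_1=X_1$); your more ambitious prefix-by-prefix verification is where the real difficulty surfaces, and as stated it does not go through.
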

\begin{proof}
Let us recall the notation: $\cP := (X_1,\ldots, X_l)$, $\cP' = F(S,\cP) = (X_1',\ldots, X'_{l'})$, $t_1 = \botStart{S}{\cP}$ and $t_2 = \botEnd{S}{\cP}$, $B^{\inB} := \bigcup_{H \in \branchesin{\inB}{S,\cP}} V(H)$.
Moreover, define
\[B^{\textup{not-in}}:=\left( \bigcup_{H \in \branchesin{\preB}{S,\cP} \cup \branchesin{\postB}{S,\cP} \cup \cC(S)} V(H) \right).\]
Also, recall that $X'_i = X_i,\ 1 \leq i < t_1$ and $X'_i = X_{i-d-1}$, for all $t_2 + d + 2 \leq i \leq l'$.

First, we want to show that $\cP'$ satisfies conditions in Definition \ref{def:path_dec}.

Let $\{u,v\}$ be an edge of $G$. Since $\cP$ is a path decomposition, $u,v \in X_i$ for some $i$.
If $i < t_1$, then $u,v \in X'_i=X_i$. If $i > t_2$, then $u,v \in X'_{i+d+1} =X_i$.

So suppose that $u,v \in X_i$ for $i \in [t_1,t_2]$.
Note that this means that $u,v \in B^{\inB} \cup S$ or $u,v \in B^{\textup{not-in}} \cup S$. 
If $u,v \in B^{\textup{not-in}} \cup S$, then we have
\begin{enumerate}[label=(\roman*)]
\item $u,v \in X'_i$, if $i \leq \cMin$;
\item $u,v \in X'_{i+d+1}$, otherwise.
\end{enumerate}

Finally, consider the case that, say, $u \in B^{\inB}$ and $v \in B^{\inB} \cup S$ (if both $u,v \in S$, then we are at the previous case). This means that $u$ is a vertex of some in-branch $H_s \in \branchesin{\inB}{S,\cP}$, so it appears in some bag of $\cP_s$ and thus of $\cP^*$. This implies that $u,v \in X'_j$ for some $j \in [\cMin + 1,\cMin+d]$. This implies that $\cP'$ satisfies conditions \ref{it:path1} and \ref{it:path2} from Definition \ref{def:path_dec}.

Now let us verify that the condition \ref{it:path3} is also satisfied, i.e., for every vertex $v$ and indices $i < s < j$, such that $v \in X'_i \cap X'_j$ we have $v \in X'_s$. Clearly the condition is satisfied for every $v$ and  $j < t_1$ or $i > t_2$, since these parts of $\cP'$ are just copied from $\cP$ without any modifications.
The situation is very similar if $i \leq t_2$ and $j \geq t_1$ and $v \in \left ( S \cup B^{\textup{not-in}} \right ) \setminus X^*$. If $v \in X^*$, then $v$ is included in all bags $X'_s$ for $s \in [\cMin ,\cMin+d +1]$, so the condition \ref{it:path3} follows from the correctness of $\cP$. Finally, if $v$ is a vertex of some $H_a \in \branchesin{\inB}{S,\cP}$, then the condition \ref{it:path3} follows from the property \ref{it:path3} that holds for the path decomposition $P_a$.

Next thing to show is that $\width{\cP'} \leq \width{\cP}$.
Let $c_1 = \card{X^*}$ and $c_2 = \max\limits_{i=t_1,\ldots,t_2} \card{B^{\inB} \cap X_i}$, and let $k=\width{\cP} +1 = \max\limits_{i=1,\ldots,l} \card{X_i}$. Observe that each $X'_i$ for $i  \notin \{t_1,\ldots,t_2+d + 1\}$ is an exact copy of some $X_j$, so $\card{X'_i} \leq k$. Moreover, each $X'_i$ for $i\in \{t_1,\ldots,\cMin\} \cup \{\cMin+d + 1,\ldots,t_2 + d + 1\}$ was obtained from some $X_j$ by removal of vertices of $B^{\inB}$, so again we have $\card{X_i} \leq k$.
Finally, for $i \in \{\cMin + 1,\ldots,\cMin + d\}$ we have 
\[
\card{X'_i} = \card{S} + \card{X^*} + \card{X_{i - \cMin}^*} \leq \card{S} + c_1 + c_2.
\]
However, by the definition of $\cP^*$, we observe that $\card{S} + c_1 + c_2 \leq \card{X_j}$ for some $j \in \{t_1,\ldots,t_2\}$. Therefore, $\width{\cP'} \leq \width{\cP}$.

Finally, recall that $\cP$ is $\cI$-connected, i.e., $X_1$ is a one-element subset of $\cI$. Observe that $X_{\botStart{S}{\cP}}$ contains whole $S$ and at least one vertex from an $S$-branch (an in-branch, to be more specific). Since $S \neq \emptyset$, we conclude that $X_{\botStart{S}{\cP}}$ cannot be $X_1$, so $X'_1 = X_1$ and therefore $\cP'$ is $\cI$-connected too.
\end{proof}

Observe that every connected component $H$ of $G-S$ is either contained in $I(S, \cP')$ (which means that $H$ is an in-branch) or waits in $I(S, \cP')$ (for all other $H$).

\begin{observation}\label{obs:structured}
For any $\cP$ and bottleneck $S$, the path decomposition $F(S,\cP)$ is $S$-structured. \qed
\end{observation}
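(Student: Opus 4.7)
The plan is to unpack the definition of $S$-structured and verify it directly from the explicit formulas \eqref{eq:prefix}--\eqref{eq:suffix} defining $\cP' = F(S,\cP)$. First, I would invoke the computation already pointed out just before Lemma~\ref{lem:reorganizing}, namely that $\botStart{S}{\cP'} = \cMin + 1$ and $\botEnd{S}{\cP'} = \cMin + d$, so that $I(S,\cP') = [\cMin + 1, \cMin + d]$. Hence it suffices to fix an arbitrary $C$ that is in $\cC(S)$ or is a pre-branch or post-branch of $S$, and show $V(C)\cap X'_i = V(C)\cap X'_j$ for all $i,j \in [\cMin + 1, \cMin + d]$.

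For $i$ in this interval, case~\eqref{eq:waiting} gives $X'_i = X^* \cup X^*_{i - \cMin} \cup S$. The key point is that the only $i$-dependent summand is $X^*_{i - \cMin}$, and by the construction of $\cP^*$ as a concatenation of the path decompositions $\cP_s$ of the in-branches $H_s$, we have $X^*_{i - \cMin} \subseteq B^{\inB}$. Since $C$ is a connected component of $G-S$ that is \emph{not} an in-branch, $V(C)$ is vertex-disjoint from every in-branch and hence $V(C) \cap X^*_{i - \cMin} = \emptyset$. Combined with $V(C) \cap S = \emptyset$ (which holds because $C$ is a subgraph of $G - S$), I obtain
\[
V(C) \cap X'_i \;=\; V(C) \cap X^*,
\]
which is a set independent of $i$. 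That is precisely the condition of $C$ waiting in $I(S,\cP')$, so $\cP'$ is $S$-structured.

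I don't anticipate any real obstacle here: the argument is purely a bookkeeping check against the piecewise definition of $X'_i$. The only mildly delicate point is making sure that the interval $I(S,\cP')$ really equals $[\cMin + 1, \cMin + d]$ rather than some larger interval; this is ensured by the two ``buffer'' bags inserted in~\eqref{eq:minimal1} and~\eqref{eq:minimal2}, which are free of in-branch vertices and therefore prevent any in-branch of $S$ from starting at $\cMin$ or ending at $\cMin + d + 1$. Once that observation is in hand, the rest follows directly from the disjointness of in-branches and non-in-branch components of $G-S$.
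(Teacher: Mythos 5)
Your proposal is correct and takes essentially the same route as the paper, which disposes of this observation with the one-line remark preceding it (every component of $G-S$ other than an in-branch meets the bags of the interval $[\cMin+1,\cMin+d]$ only through the fixed set $X^*$, hence waits there); your explicit bookkeeping against \eqref{eq:prefix}--\eqref{eq:suffix} is just a fleshed-out version of that. The only point to be aware of is that the identity $I(S,\cP')=[\cMin+1,\cMin+d]$, which you (like the paper) invoke from the construction, also implicitly uses that no pre- or post-branch of $\cP$ turns into an in-branch of $\cP'$ (this holds because the transformation does not change the $S$-content of the first and last bags meeting such a branch), not only the buffer bags \eqref{eq:minimal1}--\eqref{eq:minimal2}.
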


Now we want to define a series of transformations, which start at an arbitrary $\cI$-connected path decomposition $\cP$ and transform it into an $\cI$-connected path decomposition with no larger width, which is $S$-structured for every $S \in \cS$. For this, we will apply the $F$-transformations for all bottlenecks. In order to do this we need some technical lemmas about the structure of bottlenecks and their branches.

\begin{lemma}\label{lem:subsets_bottlenecks}
Let $S$ and $S'$ be two bottlenecks, such that $S'\subsetneq S$. There exists an $S'$-branch $H$ such that $\bigcup_{H' \in \branches{S}} V(H') \cup S  \setminus S' \subseteq V(H)$ and every $S'$-branch $H'$, different than $H$, is a non-branch connected component of $G-S$. 
\end{lemma}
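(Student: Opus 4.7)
The plan is to exhibit the required $S'$-branch $H$ explicitly by setting $W := (S \setminus S') \cup \bigcup_{H' \in \branches{S}} V(H')$ and letting $H$ be the connected component of $G-S'$ that contains $W$. The proof then proceeds in three steps: (i) show that $W$ is indeed contained in a single connected component of $G-S'$ so that $H$ is well-defined, (ii) verify that this $H$ is an $S'$-branch, and (iii) argue that every other $S'$-branch must be a non-branch connected component of $G-S$.

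For step (i), since $S' \subsetneq S$ the set $S \setminus S'$ is non-empty, and since $S$ is a bottleneck $\branches{S}$ is non-empty (in fact by Lemma~\ref{lem:small} it has at least $2k+1$ elements, though for this lemma one $S$-branch suffices). Each $S$-branch $H'$ satisfies $N_G(V(H')) = S$, so in particular every vertex of $S \setminus S'$ has a neighbor in every $S$-branch. In $G-S'$ this means any two elements of $W$ can be connected through any single vertex of $S \setminus S'$, so $W$ lies in one connected component and $H$ is well-defined. For step (ii), I would then observe that a vertex $v \in V(G) \setminus (V(H) \cup S')$ lies in $G-S'$ in a different connected component than $H$, hence cannot be adjacent to $V(H)$; this yields $N_G(V(H)) \subseteq S'$. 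Conversely, each $v \in S' \subseteq S$ has a neighbor in every $S$-branch and therefore in $H$, so $N_G(V(H)) = S'$ and $H$ is an $S'$-branch containing $W$, which is exactly the first assertion of the lemma.

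For step (iii), take any $S'$-branch $H'' \neq H$. As a different connected component of $G-S'$, its vertex set is disjoint from $V(H)$, hence disjoint both from $S\setminus S'$ and from every $S$-branch; in particular $V(H'') \subseteq V(G) \setminus S$. Thus $H''$ sits inside $G-S$, remains connected there, and is maximal among connected subgraphs of $G-S$: any path leaving $V(H'')$ in $G-S$ would also be a path in $G-S'$, contradicting $H''$ being a connected component of $G-S'$. Hence $H''$ is a connected component of $G-S$, and because all $S$-branches lie inside $V(H)$ it cannot itself be an $S$-branch, giving $H'' \in \cC(S)$ as required. The main delicate point is justifying both inclusions of $N_G(V(H)) = S'$ in step (ii): $S' \subseteq N_G(V(H))$ crucially uses that $\branches{S} \neq \emptyset$ (guaranteed by $S$ being a bottleneck, so the lemma would fail for arbitrary supersets $S$), while $N_G(V(H)) \subseteq S'$ is the standard fact that distinct connected components of $G-S'$ have no edges between them.
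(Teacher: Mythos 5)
Your proposal is correct and takes essentially the same route as the paper's proof: you let $H$ be the component of $G-S'$ containing $(S\setminus S')\cup\bigcup_{H'\in\branches{S}}V(H')$ (connectedness via the fact that every vertex of $S\setminus S'$ neighbors every $S$-branch), verify $N_G(V(H))=S'$ using $S'\subseteq S$, and show any other $S'$-branch is a non-branch component of $G-S$ by transferring maximality from $G-S'$ to $G-S$; the paper argues the same, only more tersely. (One cosmetic slip: the bound $|\branches{S}|\geq 2k+1$ comes from the definition of a bottleneck, not from Lemma~\ref{lem:small}.)
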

\begin{proof}
Let $S, S' \in \cS$ such that $S'\subsetneq S$. Clearly $S$ intersects some connected component of $G-S'$.
Since every $S$-branch $H''$ is connected and $S\setminus S' \subset N_G(V(H''))$, we observe that two connected components of $G-S'$ can not be distinctive, i.e., there exists a connected component $H$ of $G-S'$ such that $\bigcup_{H' \in \branches{S}} V(H') \cup S  \setminus S' \subseteq V(H)$.

To see that $H$ is an $S'$-branch, consider a vertex $s' \in S'$. By assumption, $S'\subseteq S$ and hence $s'$ is also a vertex of $S$.
Thus, $s'$ has a neighbor in every $S$-branch $H'$, and thus in $H$. See Figure~\ref{fig:case1} for an illustration.

Now consider an $S'$-branch $H'' \neq H$. If every vertex of $S \setminus S'$ is adjacent to some vertex of $H''$,  then $H''$ is an $S$-branch, a contradiction. Thus, $H''$ is a connected component of $G-S$, which is not an $S$-branch. 
\end{proof}

  \begin{figure}[htb]
  \centering
  \includegraphics[width=0.5\textwidth]{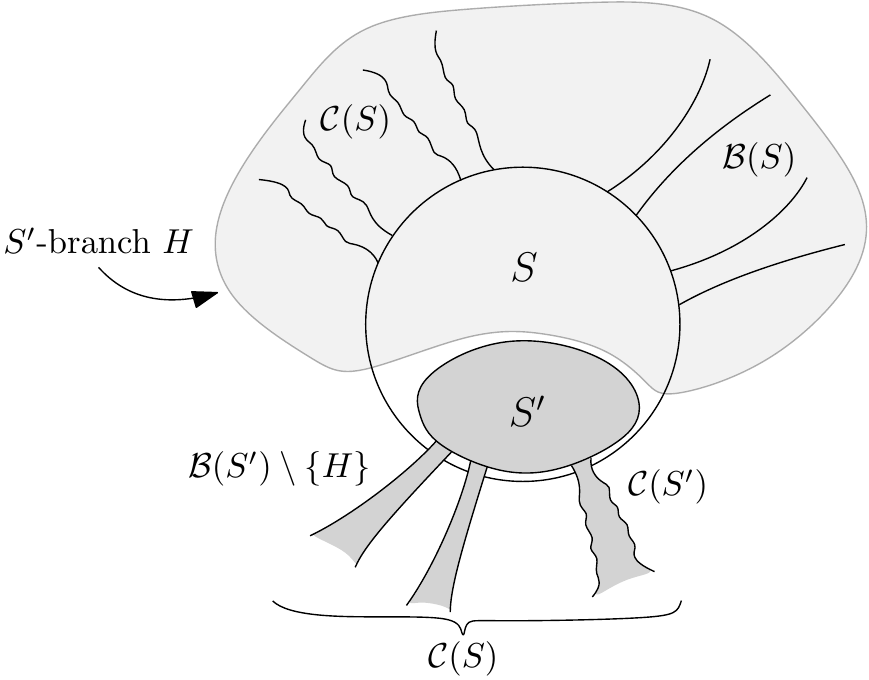}
  \caption{Illustration for Lemma~\ref{lem:subsets_bottlenecks}; $S$ and $S'$ are bottlenecks, such that  $S' \subsetneq S$. All $S$-branches are subgraphs of an $S'$-branch $H$ and every $S'$-branch, different than $H$, is a connected component of $G-S$, which is not an $S$-branch.  }
  \label{fig:case1}
\end{figure}

\begin{lemma}\label{lem:one_component}
For any two bottlenecks $S,S'$, if $S'\nsubseteq S$, then there exists exactly one connected component $H$ of $G-S$ such that $S'\subseteq S\cup V(H).$
\end{lemma}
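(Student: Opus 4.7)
The plan is to separate existence from uniqueness, and to exploit the fact that because $S'$ is a bottleneck it has many $S'$-branches, most of which must avoid the small set $S$.

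\emph{Uniqueness} will follow immediately from the hypothesis. Since $S' \not\subseteq S$, the set $S' \setminus S$ is non-empty; if two distinct components $H_1 \neq H_2$ of $G-S$ both satisfied $S' \subseteq S \cup V(H_i)$, then any $v \in S' \setminus S$ would simultaneously lie in the vertex-disjoint sets $V(H_1)$ and $V(H_2)$, a contradiction.

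For \emph{existence}, the plan is to construct, inside $G-S$, a single connected subgraph through which any two vertices of $S' \setminus S$ can be linked. To this end, I would first pick an $S'$-branch $H^*$ that is vertex-disjoint from $S$, using a pigeonhole count: the (at least) $2k+1$ $S'$-branches are pairwise vertex-disjoint, and $|S| \leq k$ by Corollary~\ref{cor:bottlenecks}, so at most $k$ of the $S'$-branches can contain a vertex of $S$, leaving at least $k+1$ of them disjoint from $S$. Fix any such $H^*$. Since $N_G(V(H^*)) = S'$, every vertex $v \in S' \setminus S$ has a neighbor in $V(H^*)$, and since $H^*$ is connected, any two such neighbors are joined by a path inside $V(H^*) \subseteq V(G) \setminus S$. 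Prepending and appending the corresponding vertices of $S' \setminus S$ yields a walk in $G-S$ between any two vertices of $S' \setminus S$, so all of $S' \setminus S$ lies in a common connected component $H$ of $G-S$, which witnesses $S' \subseteq S \cup V(H)$.

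The main --- really the only --- step is the pigeonhole argument that produces an $S'$-branch avoiding $S$; everything else reduces to chasing definitions and to the observation that the identity $N_G(V(H^*)) = S'$ lets the chosen branch serve as a universal connector for the vertices of $S'$ in the graph $G-S$.
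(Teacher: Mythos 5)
Your proof is correct and rests on essentially the same argument as the paper's: both exploit that the pairwise disjoint $S'$-branches (at least $2k+1$ of them) cannot all meet $S$, whose size is at most $k$ by Corollary~\ref{cor:bottlenecks}, and that an $S'$-branch avoiding $S$, having all of $S'$ in its neighborhood, forces $S'\setminus S$ into a single component of $G-S$. The only difference is presentational --- the paper runs the count as a contradiction (if $S'$ met two components of $G-S$, every $S'$-branch would have to contain a vertex of $S$), while you argue directly by pigeonhole and then use the chosen branch as a connector.
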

\begin{proof}
Let $S, S' \in \cS$ such that $S'\nsubseteq S$. Clearly $S'$ intersects some connected component of $G-S$. Suppose that $S'$ has a non-empty intersection with two connected components $H$ and $H'$ of $G-S$.
Thus, since every $S'$-branch  $H''$ is connected and $N_G(V(H''))=S'$, we observe that $H''$ contains a vertex of $S$ (otherwise $H,H'$ would not be two distinct components). Since $S'$-branches are vertex-disjoint, this implies that 
the number of such $S'$-branches is at most $\card{S}$, which is in turn at most $k$ by Corollary \ref{cor:bottlenecks}. However, this contradicts the assumption that $S'$ is a bottleneck.
\end{proof}

The next remark is a straightforward consequence of Lemma~\ref{lem:subsets_bottlenecks} and Lemma~\ref{lem:one_component}.

\begin{remark}\label{rem:all_branches}
Let $S,S'$ be bottlenecks such that $S'\nsubseteq S$ and $S \nsubseteq S'$. Let $H$ be the connected component of $G-S$, such that $S'\subseteq S\cup V(H)$.
There exists exactly one connected component $C$ of $G-S'$ such that $\bigcup_{H' \in \branches{S} \setminus H} V(H') \cup S \setminus S' \subseteq V(C)$.
Moreover, all $S'$-branches but possibly  $C$ are subgraphs of $H$.
\qed
\end{remark}
 
See Figure~\ref{fig:case2-3} for the illustration for Lemma~\ref{lem:one_component} and Remark~\ref{rem:all_branches}. 

\begin{figure}[htb]
\centering
\begin{subfigure}{0.5\textwidth}
  \centering
  \includegraphics[width=0.83\textwidth]{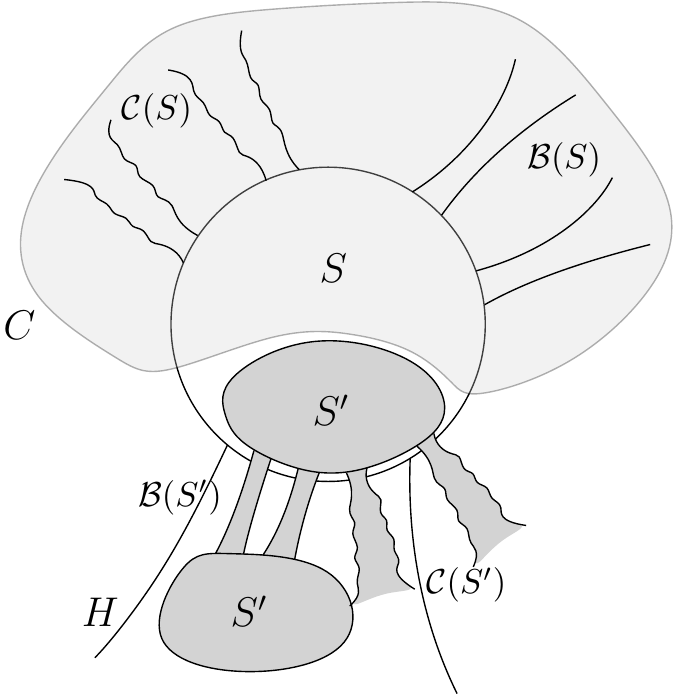}
  \caption{$S \setminus S' \neq \emptyset$ and $S \cap S' \neq \emptyset$.}
  \label{fig:case2}
\end{subfigure}~~
\begin{subfigure}{0.5\textwidth}
  \centering
  \includegraphics[width=0.83\textwidth]{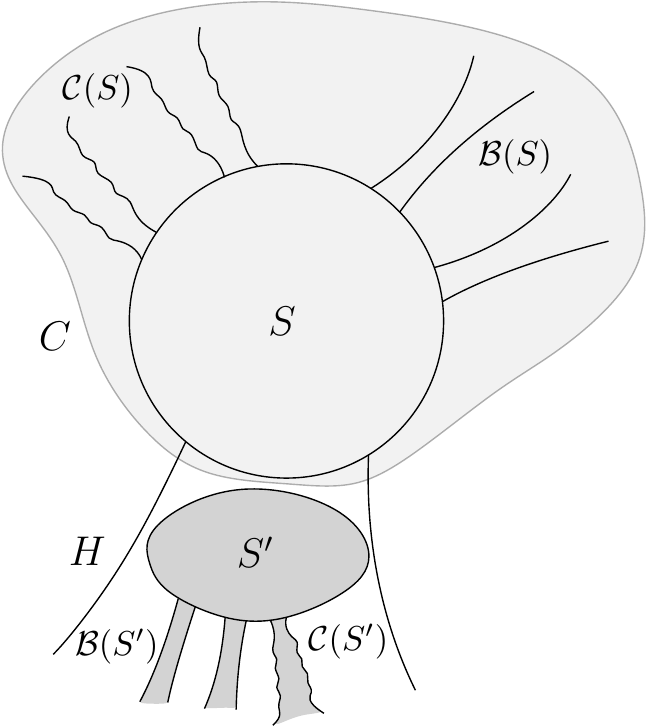}
  \caption{$S \cap S' = \emptyset$. }
  \label{fig:case3}
\end{subfigure}
\caption{Illustration of two cases for Lemma~\ref{lem:one_component} and Remark~\ref{rem:all_branches}; $S$ and $S'$ are bottlenecks. All $S$-branches (apart from $H$, which is a connected component of $G-S$, which may or may be not an $S$-branch) are subgraphs of $C$ and $S' \subseteq S \cup V(H)$.}
\label{fig:case2-3}
\end{figure}

We say two bottlenecks $S, S'$ are \emph{well-nested} in $\cP$ if 
\begin{enumerate}[label=\textup{(\roman*)}]
\item $I(S',\cP) \subsetneq I(S,\cP)$ or
\item $I(S,\cP) \subseteq I(S',\cP)$ or 
\item $I(S,\cP) \cap I(S',\cP) = \emptyset$.
\end{enumerate}
Observe that the ordering of $S,S'$ in the definition above matters.

\begin{lemma} \label{lem:structured}
For any $\cI$-connected path decomposition $\cP$ and bottlenecks $S,S'$, if $\cP$ is $S'$-structured, then $S$ and $S'$ are well-nested in $\cP$.
Moreover, if $S \subsetneq S'$, then either $I(S,\cP) \cap I(S',\cP) = \emptyset$ or $I(S',\cP) \subseteq I(S,\cP)$.
\end{lemma}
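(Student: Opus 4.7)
The plan is a case analysis on how $S$ and $S'$ compare as sets: if $S\subsetneq S'$ or $S'\subsetneq S$ I would invoke Lemma~\ref{lem:subsets_bottlenecks}, and if the two are incomparable I would combine Lemma~\ref{lem:one_component} with Remark~\ref{rem:all_branches}. In every case one extracts a single distinguished $S$-branch or component $H$ of $G-S$ whose vertex set contains $S'\setminus S$ whenever that set is non-empty, together with the information that every \emph{other} $S$-branch is a subgraph of a non-branch component of $G-S'$ (or of a single distinguished component $C$ of $G-S'$ in the incomparable case).

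The key tool is the following dichotomy. Because $\cP$ is $S'$-structured, every component $K$ of $G-S'$ that is not an $S'$-in-branch waits in $I(S',\cP)=:[t_1',t_2']$. Hence for any $S$-in-branch $H''$ that lies as a subgraph of $K$, the intersection $V(H'')\cap X_i = V(H'')\cap(V(K)\cap X_i)$ is constant as $i$ ranges over $[t_1',t_2']$. The consecutivity axiom of a path decomposition then forces exactly one of two alternatives: either $V(H'')\cap X_i=\emptyset$ for every such $i$, in which case the interval of $H''$ in $\cP$ is disjoint from $[t_1',t_2']$, or $V(H'')\cap X_i\neq\emptyset$ for every such $i$, in which case $[t_1',t_2']$ is contained in the interval of $H''$. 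When $K$ is itself an $S'$-in-branch, every $S$-in-branch lying inside $K$ automatically has its interval contained in $[t_1',t_2']$.

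Well-nestedness follows by collecting cases over all $S$-in-branches. If some $S$-in-branch engulfs $[t_1',t_2']$ we obtain $I(S',\cP)\subseteq I(S,\cP)$; if every $S$-in-branch is confined inside an $S'$-in-branch we obtain $I(S,\cP)\subseteq I(S',\cP)$; otherwise every $S$-in-branch has interval strictly before or strictly after $[t_1',t_2']$, and the values of $\botStartS{S}$ and $\botEndS{S}$ then determine whether $I(S',\cP)\subsetneq I(S,\cP)$ (both sides populated) or $I(S,\cP)\cap I(S',\cP)=\emptyset$ (only one side). The one $S$-in-branch to which the dichotomy does not directly apply is the distinguished $H$ itself, but whenever $S'\setminus S\neq\emptyset$ one has $S'\setminus S\subseteq V(H)\cap X_{t_1'}\cap X_{t_2'}$, so $H$ is present in every bag of $[t_1',t_2']$; thus, if $H$ happens to be an $S$-in-branch its interval engulfs $[t_1',t_2']$, which places us in the first collected case.

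For the additional claim, suppose $S\subsetneq S'$; then the distinguished $H$ supplied by Lemma~\ref{lem:subsets_bottlenecks} is an $S$-branch containing the non-empty set $S'\setminus S$, so it occupies every bag of $[t_1',t_2']$, and every other $S$-branch is a non-branch component of $G-S'$ on which the dichotomy applies. Consequently no $S$-in-branch can start or end strictly inside $[t_1',t_2']$, which rules out $I(S,\cP)\subsetneq I(S',\cP)$ and leaves only $I(S',\cP)\subseteq I(S,\cP)$ or $I(S,\cP)\cap I(S',\cP)=\emptyset$. The main technical delicacy throughout is the bookkeeping in the sub-case where $H$ is not itself an $S$-in-branch: then $I(S,\cP)$ must be read off entirely from the remaining $S$-in-branches, and one must combine the dichotomy with their relative positions to reach the claimed conclusion.
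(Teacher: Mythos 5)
Your proposal is correct and takes essentially the same route as the paper: it relies on the same structural facts (Lemma~\ref{lem:subsets_bottlenecks}, Lemma~\ref{lem:one_component}, Remark~\ref{rem:all_branches}) together with the waiting property of an $S'$-structured decomposition, and your per-in-branch dichotomy (confined in an $S'$-in-branch, engulfing $I(S',\cP)$, or disjoint from it) is just a repackaging of the paper's case analysis. One cosmetic slip: for $S'\subsetneq S$ the distinguished subgraph supplied by Lemma~\ref{lem:subsets_bottlenecks} is an $S'$-branch of $G-S'$ containing all $S$-branches, not a component of $G-S$, but your dichotomy applied to that single component covers this case unchanged.
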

\begin{proof}
Let $\cP = (X_1,\ldots,X_l)$ be an $\cI$-connected path decomposition of $G$ and let $S, S' \in \cS$. Suppose $\cP$ is $S'$-structured.
Let us assume that $I(S',\cP) \cap I(S,\cP) \neq \emptyset$, we will show that either $I(S',\cP) \subsetneq I(S,\cP)$ or $I(S,\cP) \subseteq I(S',\cP)$.

\paragraph*{Case A: $S'\subsetneq S$.}
By Lemma~\ref{lem:subsets_bottlenecks}, there exists an $S'$-branch $H$ such that for every $S$-branch $H'$ it holds that $S \setminus S' \cup V(H')\subseteq V(H)$, and thus $[\graphStartS{H'},\graphEndS{H'}] \subseteq [\graphStartS{H},\graphEndS{H}]$. Recall that $\botStartS{S} = \graphStartS{H_1'}$ and $\botEndS{S} = \graphEndS{H_2'}$ for some in-branches $H_1',H_2' \in \branches{S}$, so $I(S,\cP) = [\botStartS{S}, \botEndS{S}] \subseteq [\graphStartS{H},\graphEndS{H}]$.
Consider two subcases.
\subparagraph*{Subcase A1: $\graphStartS{H} \in I(S',\cP)$.} 
Since $\cP$ is $S'$-structured, we observe that $H$ is an in-branch for $S'$, which implies that $I(S,\cP) \subseteq [\graphStartS{H},\graphEndS{H} ] \subseteq I(S',\cP)$.
\subparagraph*{Subcase A2: $\graphStartS{H} \notin I(S',\cP)$.}
First, observe that if $\graphStartS{H} > \botEndS{S'}$, then $\botStartS{S} > \botEndS{S'}$ and thus $I(S,\cP) \cap I(S',\cP) = \emptyset$, which contradicts our assumption. Analogously, if $\graphEndS{H} < \botStartS{S'}$, we again obtain that $I(S,\cP) \cap I(S',\cP) = \emptyset$.
Therefore assume that $\graphStartS{H} < \botStartS{S'} \leq \graphEndS{H}$. Observe that this implies that $H$ is a pre-branch for $S'$ and, since $\cP$ is $S'$-structured, $H$ waits in $I(S',\cP)$. In particular $(S \setminus S') \cap X_{\botStartS{S'}} = (S \setminus S') \cap X_{\botStartS{S'}+1} = \ldots = (S \setminus S') \cap X_{\botEndS{S'}}$. Since $I(S',\cP) \cap I(S,\cP) \neq \emptyset$, it is necessary that $\botStartS{S} \leq \botStartS{S'}$ (otherwise $\botStartS{S} > \botEndS{S'}$).
 Thus, $\botStartS{S} \leq \botStartS{S'} \leq \botEndS{S'} \leq \botEndS{S}$ (recall $H$ waits in $I(S',\cP)$).
Summing up, if $S' \subsetneq S$, then either $I(S',\cP) \subsetneq I(S,\cP)$ or $I(S,\cP) \subseteq I(S',\cP)$, which completes the proof for this case.

\paragraph*{Case B: $S'\nsubseteq S$.}
By Lemma~\ref{lem:one_component}, there exists exactly one connected component $H$ of $G-S$ such that $S'\subseteq S\cup V(H)$. 
Since $S'\nsubseteq S$, we observe that $V(H)\cap S'\neq\emptyset$.

If $H$ is an $S$-branch that is an in-branch in $\cP$, then
\[I(S',\cP) \subseteq [\graphStart{H}{\cP},\graphEnd{H}{\cP}] \subseteq I(S, \cP).\]
We observe that $I(S',\cP) \subseteq I(S, \cP)$ is equivalent to $I(S',\cP) \subsetneq I(S, \cP)$ or $I(S',\cP) = I(S, \cP)$, thus $S$ and $S'$ are well-nested in $\cP$. So assume that $H$ is a connected component of $G-S$, that is not an in-branch for $S$ (it may still be a pre- or a post-branch). We consider now two subcases.

\subparagraph*{Subcase B1: $S \subsetneq S'$.}
By Lemma~\ref{lem:subsets_bottlenecks}, all $S$-branches possibly except for $H$ are not $S'$-branches and all $S'$-branches are subgraphs of $H$.

Because $\cP$ is $S'$-structured, every $S$-branch but possibly $H$ waits in $I(S',\cP)$.
In particular, every in-branch $H''$ for $S$ does wait in $I(S',\cP)$.
Thus, for every such an in-branch $H''$ it holds that $I(S',\cP)\subseteq [\graphStartS{H''},\graphEndS{H''}]$ or $I(S',\cP) \cap [\graphStartS{H''},\graphEndS{H''}]=\emptyset$.
Note that the second condition implies that $I(S',\cP)\cap I(S,\cP)=\emptyset$, which contradicts our assumption.
Therefore  we obtain that $I(S',\cP)\subseteq [\graphStartS{H''},\graphEndS{H''}] \subseteq I(S,\cP)$.
Note that this shows the second claim of the lemma.

\subparagraph*{Subcase B2: $S \setminus S' \neq \emptyset$.} Let $C$ be the connected component of $G-S'$, for which it holds $\bigcup_{H' \in \branches{S}\setminus H} V(H') \cup S \setminus S' \subseteq V(C)$, whose existence is guaranteed by Remark~\ref{rem:all_branches}. 

If $C$ is an in-branch for $S'$, we observe that $I(S,\cP) \subseteq [\graphStartS{C}, \graphEndS{C}] \subseteq I(S',\cP)$, because $\cP$ is $S'$-structured.
On the other hand, if $C$ is not an in-branch for $S'$, then all in-branches of $S$ wait in $I(S',\cP)$. This is because all subgraphs but in-branches of $S'$ wait in $I(S',\cP)$, and every in-branch for $S$ is vertex-disjoint with every in-branch for $H$, since they are all contained in $V(C) \cup S'$. Thus, since $I(S,\cP) \cap I(S',\cP) \neq \emptyset$, we conclude that $I(S',\cP) \subseteq I(S,\cP)$, which completes the proof.
\end{proof}

In the next lemma, we show that we can apply a series of $F$-transformations, one for each bottleneck, so that the structure obtained in previous $F$-transformations is not 'destroyed' during the subsequent $F$-transformations.

\begin{lemma}\label{lem:final_structured}
Let $S,S'$ be bottlenecks and let $\cP$ be an $S$-structured $\cI$-connected path decomposition. Let $\cP' = F(S',\cP)$.
\begin{enumerate}
\item If $I(S',\cP) \subseteq I(S,\cP)$, then $\cP'$ is $S$-structured and $I(S',\cP') \subseteq I(S,\cP')$. \label{case:nested}
\item If $\botEnd{S}{\cP} < \botStart{S'}{\cP}$, then $\cP'$ is $S$-structured and $\botEnd{S}{\cP'} < \botStart{S'}{\cP'}$. \label{case:disjoint1}
\item If $\botEnd{S'}{\cP} < \botStart{S}{\cP}$, then $\cP'$ is $S$-structured and $\botEnd{S'}{\cP'} < \botStart{S}{\cP'}$. \label{case:disjoint2}
\end{enumerate}
\end{lemma}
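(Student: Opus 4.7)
\emph{Plan.} The argument handles the three cases separately, always tracking how the transformation $F(S',\cP)$ as defined by~\eqref{eq:prefix}--\eqref{eq:suffix} relocates the vertices of components of $G-S$. A uniform observation is that in $\cP'$ the interval of $S'$ equals $[\cMin+1,\cMin+d]$, since by construction every in-branch of $S'$ has its vertices packed into the $\cP^*$-bags, each of which contains $S'$.

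Cases~\ref{case:disjoint1} and~\ref{case:disjoint2} are straightforward because the modifications happen strictly after, respectively strictly before, $I(S,\cP)$. In Case~\ref{case:disjoint1}, every bag of $I(S,\cP)$ has index smaller than $t_1=\botStart{S'}{\cP}$, so it is copied verbatim into $\cP'$ by rule~\eqref{eq:prefix}; hence every in-branch of $S$ retains its interval, every waiting component of $G-S$ retains its intersection pattern, and $\botEnd{S}{\cP'}=\botEnd{S}{\cP}<t_1\leq\cMin+1=\botStart{S'}{\cP'}$. Case~\ref{case:disjoint2} is symmetric: the bags of $I(S,\cP)$ lie in the suffix range, so~\eqref{eq:suffix} copies them into $\cP'$ with indices shifted by $d+1$, giving $\botStart{S}{\cP'}=\botStart{S}{\cP}+d+1>t_2+d\geq\cMin+d=\botEnd{S'}{\cP'}$ and preserving the waiting pattern of every non-in-branch component.

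Case~\ref{case:nested} is the substantive one. Since $I(S',\cP)\subseteq I(S,\cP)$, the entire modification happens inside $I(S,\cP)$, and I proceed in two steps. First I locate the new $S$-endpoints: any in-branch $H$ of $S$ in $\cP$ with $\graphStart{H}{\cP}<t_1$ (respectively $\graphEnd{H}{\cP}>t_2$) has a starting (respectively ending) bag copied verbatim by~\eqref{eq:prefix} (respectively shifted by $d+1$ via~\eqref{eq:suffix}), so $H$ remains an in-branch of $S$ in $\cP'$ with the same (or $+d+1$-shifted) endpoint; this yields $\botStart{S}{\cP'}\leq t_1$ and $\botEnd{S}{\cP'}\geq t_2+d+1$, and hence $I(S',\cP')=[\cMin+1,\cMin+d]\subseteq I(S,\cP')$, settling the inclusion claim.

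The remaining and main obstacle is to verify the waiting property for every pre-/post-branch or non-$S$-branch component $C$ of $G-S$ in the new interval $I(S,\cP')$. The plan is to show via Lemma~\ref{lem:subsets_bottlenecks}, Lemma~\ref{lem:one_component} and Remark~\ref{rem:all_branches} that $V(C)$ is disjoint from the set $B^{\inB}:=\bigcup_{H\in\branchesin{\inB}{S',\cP}}V(H)$; intuitively, if $V(C)$ intersected some in-branch $H_s$ of $S'$, then the $S$-structured waiting of $C$ in $I(S,\cP)\supseteq I(S',\cP)$ would force the vertices of $V(C)\cap V(H_s)$ to be constantly present across $I(S,\cP)$, which conflicts with $H_s$ being contained in a sub-interval of $I(S',\cP)$ only. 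Once disjointness is established, for each of the rules~\eqref{eq:tillcmin}, \eqref{eq:minimal1}, \eqref{eq:minimal2}, and~\eqref{eq:aftercmin}, the intersection $V(C)\cap X'_i$ equals the corresponding intersection $V(C)\cap X_j$ of $\cP$ (since removing $B^{\inB}$ does not touch $V(C)$); for the waiting block~\eqref{eq:waiting} it equals $V(C)\cap X^*$, which by the $S$-structured waiting of $C$ and the observation that $V(C)\cap S'\subseteq V(C)\cap X_{\cMin}\subseteq X^*$ coincides with the constant $V(C)\cap X_{\cMin}$. Therefore $V(C)\cap X'_i$ is the same constant across all bags of $I(S,\cP')$, so $C$ still waits in $I(S,\cP')$ and $\cP'$ is $S$-structured.
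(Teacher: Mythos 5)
Your handling of cases~\ref{case:disjoint1} and~\ref{case:disjoint2} is at the same level of detail as the paper's (which dispatches them with the same copying observation), and your overall intuition for case~\ref{case:nested} --- the transformation only relocates vertices of in-branches of $S'$, which should be invisible to the components of $G-S$ that have to wait --- is the right one. But there is a genuine gap in case~\ref{case:nested}: the disjointness claim that every pre-branch, post-branch and member of $\cC(S)$ is disjoint from $B^{\inB}=\bigcup_{H\in\branchesin{\inB}{S',\cP}}V(H)$ is only argued ``intuitively'', and the sketched argument collapses precisely when $I(S',\cP)=I(S,\cP)$, a situation allowed by the hypothesis $I(S',\cP)\subseteq I(S,\cP)$ and needed when the lemma is invoked in the proof of Lemma~\ref{lem:main_transformation}. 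If an in-branch $H_s$ of $S'$ satisfies $[\graphStart{H_s}{\cP},\graphEnd{H_s}{\cP}]=I(S,\cP)$, then a waiting component $C$ may intersect $H_s$ with no contradiction (its constant trace simply occupies the whole interval); in fact when $S'\subsetneq S$, Lemma~\ref{lem:subsets_bottlenecks} says that every $S'$-branch except one distinguished one \emph{is} a component in $\cC(S)$, and such a branch can be an in-branch of $S'$, so the disjointness you rely on is false in general. The same corner case undermines your endpoint bookkeeping: the conclusion $\botStart{S}{\cP'}\leq t_1(S',\cP)$ and $\botEnd{S}{\cP'}\geq t_2(S',\cP)+d+1$ presupposes in-branches of $S$ starting strictly before $\botStart{S'}{\cP}$ and ending strictly after $\botEnd{S'}{\cP}$, which need not exist when the endpoints of the two intervals coincide; in that situation the extremal in-branch of $S$ is exactly the component whose vertices are repacked into the middle block, and you never track where its new endpoints land. (A smaller slip: $V(C)\cap X_{c_{\min}}\subseteq X^*$ fails when $C$ contains vertices of $S'\setminus S$; this is repairable because the bags of~\eqref{eq:waiting} contain $S'$ anyway.)

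The paper closes exactly these holes by a structural identification that your proof skips: it first rules out $S'\subsetneq S$ via Lemma~\ref{lem:structured}, then uses Lemma~\ref{lem:one_component} and Remark~\ref{rem:all_branches} (and Lemma~\ref{lem:subsets_bottlenecks} when $S\subsetneq S'$) to locate the unique component $H$ of $G-S$ with $S'\subseteq S\cup V(H)$, and shows both that $H$ is an in-branch of $S$ and that every in-branch of $S'$ is a subgraph of $H$. Consequently all changes made by $F(S',\cP)$ happen inside a single in-branch of $S$, which yields at once the disjointness of every other component from $B^{\inB}$ and the chain $I(S',\cP')\subseteq[\graphStart{H}{\cP'},\graphEnd{H}{\cP'}]\subseteq I(S,\cP')$. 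To repair your argument you would need to prove this containment of all in-branches of $S'$ in one in-branch of $S$ (or otherwise treat the case $I(S',\cP)=I(S,\cP)$, in particular $S'\subsetneq S$, separately), rather than relying on the waiting argument alone.
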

\begin{proof}
Let $\cP = (X_1,\ldots,X_l)$ be an $\cI$-connected path decomposition of $G$, let $S, S' \in \cS$ and $\cP' = F(S',\cP)$. Moreover, assume that $\cP$ is $S$-structured. 

First, let us prove case \ref{case:nested}, i.e. we assume that $I(S',\cP) \subseteq I(S,\cP)$.
Observe that by Lemma \ref{lem:structured} we obtain that $S' \nsubseteq S$.
By Lemma~\ref{lem:one_component}, there exists exactly one connected component $H$ of $G-S$ such that $S' \subseteq S \cup V(H)$. We observe that $H$ is an in-branch for $S$, otherwise $H$ would wait in $I(S,\cP)$, which contradicts the assumption that $I(S',\cP) \subseteq I(S,\cP)$ and that $\cP$ is $S$-structured. We are going to show now that every $S'$-branch $H'$, which is an in-branch, is a subgraph of $H$. If $S \subsetneq S'$, then we obtain it immediately from Lemma~\ref{lem:subsets_bottlenecks}, so let $S \setminus S' \neq \emptyset$. Let $C$ be a connected component of $G-S'$, such that $\bigcup_{H' \in \branches{S}\setminus H} V(H') \cup S \setminus S' \subseteq V(C)$, whose existence is guaranteed by Remark~\ref{rem:all_branches}. Recall that $C$ might  or might not be an $S'$-branch, but for sure it is not an in-branch for $S'$, because then $I(S,\cP) \subsetneq I(S',\cP)$, which is a contradiction. Thus, by Remark~\ref{rem:all_branches}, every in-branch for $S'$ is a subgraph of $H$.

Because every in-branch for $S'$ is a subgraph of $H$, we conclude that $I(S',\cP) \subseteq [\graphStart{H}{\cP}, \graphEnd{H}{\cP}]$ and the only changes made by the transformation $F(S',\cP)$ concern the vertices of $H \in \branchesin{\inB}{S}$. Every connected component of $G-S'$, apart from in-branches (for $S'$), waits in $I(S',\cP')$, so $F(S',\cP)$ is $S$-structured and $I(S',\cP') \subseteq [\graphStart{H}{\cP'}, \graphEnd{H}{\cP'}]\subseteq I(S,\cP')$.

To see that cases \ref{case:disjoint1} and \ref{case:disjoint2} hold as well, notice that the prefix $(X_1,X_2,\ldots,X_{\botStart{S'}{\cP}-1})$ and suffix $(X_{\botEnd{S'}{\cP}+1}, X_{\botEnd{S'}{\cP}+2},\ldots,X_l)$ of $\cP$ are just copied into $\cP'$ without any changes.
\end{proof}

In the following lemma it is crucial that the path decomposition $\cP$ is not only $S$-structured but has been obtained by by applying the transformation described in \eqref{eq:prefix}-\eqref{eq:suffix} to $\cP_0$. In particular, the bags added in \eqref{eq:minimal1} and~\eqref{eq:minimal2} will play a crucial role in ensuring that the path decomposition returned by $F(S',\cP)$ remains $S$-structured.
\begin{lemma}\label{lem:final_structured2}
Let $S,S'$ be bottlenecks and let $\cP_0$ be any $\cI$-connected path decomposition. Let $\cP = F(S, \cP_0)$ and $\cP' = F(S',\cP)$.
If $I(S,\cP) \subsetneq I(S',\cP)$, then $\cP'$ is $S$-structured and $I(S,\cP') \subseteq I(S',\cP')$ or $I(S,\cP') \cap I(S',\cP') = \emptyset$.
\end{lemma}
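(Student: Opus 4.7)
By Lemma~\ref{lem:reorganizing}, $\cP'$ is already an $\cI$-connected path decomposition of width at most $\width{\cP}$, so it remains to verify $S$-structuredness of $\cP'$ and to locate $I(S,\cP')$ relative to $I(S',\cP')$. The plan is to exploit the particular shape of $\cP = F(S,\cP_0)$, in particular the ``extra'' bags inserted by \eqref{eq:minimal1} and \eqref{eq:minimal2}.

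Let $[t_1,t_2] := I(S,\cP)$, so that the bags of $\cP$ at positions $t_1 - 1$ and $t_2 + 1$ are both equal to $X_{c_{\min}(S,\cP_0)} \setminus B^{\inB}_S$; a direct inspection shows that this bag coincides with $S \cup X^*$ (in the notation of the construction of $F(S,\cP_0)$) and in particular contains no vertex of any in-branch of $S$. Writing $[a,b] := I(S',\cP)$, the strict inclusion $[t_1,t_2] \subsetneq [a,b]$ guarantees that at least one of $t_1 - 1, t_2 + 1$ lies inside $[a,b]$. The central combinatorial step is to show that $c_{\min}(S',\cP) \notin [t_1,t_2]$.

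For any $i \in [t_1,t_2]$, the bag of $\cP$ at position $i$ equals (by \eqref{eq:waiting}) the extra-bag content $S \cup X^*$ augmented by a slice of in-branch-of-$S$ vertices. A case analysis on the relationship between $S$ and $S'$, using Lemmas~\ref{lem:subsets_bottlenecks} and~\ref{lem:one_component} together with Remark~\ref{rem:all_branches}, distinguishes two situations: either some in-branch of $S$ contributes vertices outside $B^{\inB}_{S'}$---i.e.\ to some $H' \in \branchesin{\preB}{S'} \cup \branchesin{\postB}{S'} \cup \cC(S')$---strictly inflating, at interior bags, the bag size restricted to non-in-branch-of-$S'$ vertices (the quantity minimized in the definition of $c_{\min}$) compared to the extra-bag position inside $[a,b]$; or every in-branch of $S$ lies wholly inside $B^{\inB}_{S'}$, and then the interior and extra-bag counts agree, so the minimum-index tie-breaker in the definition of $c_{\min}$ together with the position of the extra bag still forces $c_{\min}(S',\cP)$ outside $[t_1,t_2]$.

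Once $c_{\min}(S',\cP) \notin [t_1,t_2]$ is established, I would conclude by examining where the in-branches of $S$ end up in $\cP' = F(S',\cP)$. The structural lemmas imply that after $F(S',\cdot)$, the interval $I(S,\cP')$ either lies entirely inside the new middle block $[c_{\min}(S',\cP) + 1, c_{\min}(S',\cP) + d'] = I(S',\cP')$ (when the in-branches of $S$ live inside in-branches of $S'$ and are thus relocated into the middle) or lies entirely on one side of it (when they live outside $B^{\inB}_{S'}$ and are simply shifted by the prefix/suffix part of the transformation), yielding $I(S,\cP') \subseteq I(S',\cP')$ or $I(S,\cP') \cap I(S',\cP') = \emptyset$, respectively. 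In either case, $S$-structuredness of $\cP'$ follows from that of $\cP$: non-in-branch-of-$S$ components outside $[a,b]$ are untouched by $F(S',\cdot)$, while those whose vertices enter $B^{\inB}_{S'}$ travel together with in-branches of $S'$ and remain waiting via the waiting set of the second transformation. The hard part will be the case analysis certifying $c_{\min}(S',\cP) \notin [t_1,t_2]$; the structural Lemmas~\ref{lem:subsets_bottlenecks} and~\ref{lem:one_component} together with Remark~\ref{rem:all_branches} are the essential tools here.
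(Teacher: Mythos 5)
Your overall plan coincides with the paper's: the two bags inserted by \eqref{eq:minimal1}--\eqref{eq:minimal2} are exactly what is used to get $c_{\min}(S',\cP)\notin I(S,\cP)$, and the conclusion is split into a containment and a disjointness alternative. The problem is that the step you yourself call central is not established by the counting argument you sketch, and it is precisely where the paper's case analysis does the real work. First, the claim that interior bags of $I(S,\cP)=[t_1,t_2]$ have a \emph{strictly} larger intersection with the non-in-branch-of-$S'$ vertices than the extra bags is false as stated: by \eqref{eq:waiting} the bag at an interior position carries a slice of exactly one in-branch of $S$, and that particular in-branch may lie entirely inside $\bigcup_{H\in\branchesin{\inB}{S'}}V(H)$ (or even inside $S'$), in which case the two counts agree at that position. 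What one really has is only ``$\geq$'', and then the minimum-index tie-break saves you \emph{only if the left extra bag, at position $t_1-1$, belongs to $I(S',\cP)$}. If $\botStart{S'}{\cP}=t_1$, so that only the right extra bag at $t_2+1$ lies in $I(S',\cP)$, the tie-break selects the smaller, interior index and your argument gives nothing; the strict inclusion $I(S,\cP)\subsetneq I(S',\cP)$ alone does not exclude this configuration. Excluding it is exactly the content of the structural analysis you defer: via Lemmas~\ref{lem:subsets_bottlenecks}, \ref{lem:one_component} and Remark~\ref{rem:all_branches} one shows that in the subcases where disjointness is the desired outcome every in-branch of $S'$ is vertex-disjoint from $S\cup B^{\inB}$ and therefore waits in $I(S,\cP)$, which forces either $\botStart{S'}{\cP}\leq t_1-1$ or its interval to avoid $I(S,\cP)$ entirely. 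So the ``hard part'' you postpone is not a routine verification; it is the proof.

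A second, related misalignment: you also try to force $c_{\min}(S',\cP)\notin[t_1,t_2]$ in the alternative where all in-branches of $S$ sit inside in-branches of $S'$, and this is both unnecessary and the place where your tie-break reasoning is weakest. The paper treats that alternative differently: there the special component provided by Lemma~\ref{lem:subsets_bottlenecks}, Lemma~\ref{lem:one_component} and Remark~\ref{rem:all_branches} (the $S'$-branch $H$ when $S'\subsetneq S$, respectively the component $C$ otherwise) is an in-branch of $S'$ containing all in-branches of $S$, and since $F(S',\cdot)$ does not alter the bag structure restricted to any in-branch of $S'$, one gets $I(S,\cP')\subseteq I(S',\cP')$ and $S$-structuredness directly, with no reference to $c_{\min}$ at all. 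Note also that your dichotomy is not obviously exhaustive as phrased (a priori an in-branch of $S$ could be contained in $S'$); that this cannot happen, and that $S'$ is in fact disjoint from all in-branches of $S$ in the relevant subcases, again comes out of the same case analysis on $S'\subsetneq S$, $S\subsetneq S'$ and the incomparable case -- the analysis your write-up names but never carries out.
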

\begin{proof}
Let $\cP = (X_1,\ldots,X_l)$ be an $\cI$-connected path decomposition of $G$, let $S, S' \in \cS$ and $\cP' = F(S',\cP)$. Moreover, assume that $\cP = F(S, \cP_0)$ for some $\cI$-connected path decomposition $\cP_0$. In particular, this implies that $\cP$ is $S$-structured.
Finally, assume that $I(S,\cP) \subsetneq I(S',\cP)$.

\paragraph*{Case A: $S' \subsetneq S$.} By Lemma~\ref{lem:subsets_bottlenecks} there exists an $S'$-branch $H$ such that $\bigcup_{H' \in \branches{S}} V(H') \cup S  \setminus S' \subseteq V(H)$.
If $H$ is an in-branch for $S'$ then $I(S,\cP) \subseteq [\graphStart{H}{\cP}, \graphEnd{H}{\cP}] \subsetneq I(S',\cP)$.
However, recall that $F$-transformation applied to $S'$ and $\cP$ does not change the structure of the bags restricted to $H$ (or any other in-branch of $S'$). Therefore we conclude that $I(S,\cP') \subseteq [\graphStart{H}{\cP'}, \graphEnd{H}{\cP'}] \subsetneq I(S',\cP')$ and $\cP'$ is $S$-structured.

Now assume that $H$ is a pre-branch or a post-branch for $S'$. From the construction of $\cP$ we have that $\cMin{(S',\cP)} \not\in I(S,\cP)$.
Because every in-branch $H'$ for $S'$ either waits in $I(S,\cP)$ or $[\graphStart{H'}{\cP}, \graphEnd{H'}{\cP}] \cap I(S,\cP) = \emptyset$, we obtain that $I(S,\cP') \cap I(S',\cP') = \emptyset$ and $\cP'$ is $S$-structured. 

\paragraph*{Case B: $S' \nsubseteq S$ and $S \nsubseteq S'$.}
By Lemma~\ref{lem:one_component} there exists exactly one connected component $H$ of $G-S$ such that $S' \subseteq S \cup V(H)$. Because $V(H) \cap S' \neq \emptyset$ we have that $I(S',\cP) \subseteq [\graphStart{H}{\cP}, \graphEnd{H}{\cP}]$. We observe that $H$ cannot be an in-branch for $S$, otherwise $I(S',\cP) \subseteq [\graphStart{H}{\cP}, \graphEnd{H}{\cP}] \subseteq I(S, \cP)$, which contradicts the assumption that $I(S,\cP) \subsetneq I(S',\cP)$.
Thus, $H \not\in \branchesin{\inB}{S}$.

From the facts that $\cP$ is $S$-structured and $I(S',\cP) \subseteq [\graphStart{H}{\cP}, \graphEnd{H}{\cP}]$, we observe that $H$ waits in $I(S,\cP)$.
By Remark~\ref{rem:all_branches} there exists a connected component $C$ of $G-S'$ such that $\bigcup_{H' \in \branches{S} \setminus H} V(H') \cup S \setminus S' \subseteq V(C)$. Because $H$ is not an in-branch for $S$, all in-branches of $S$ are subgraphs of $C$. If $C$ is an in-branch for $S'$ then $I(S,\cP) \subseteq [\graphStart{C}{\cP}, \graphEnd{C}{\cP}] \subsetneq I(S',\cP)$. Because $F$-transformation does not change bags inside one in-branch, we have that $I(S,\cP') \subseteq [\graphStart{C}{\cP'}, \graphEnd{C}{\cP'}] \subsetneq I(S',\cP')$ and $\cP'$ is $S$-structured. On the other hand, assume that $C$ is not an in-branch for $S'$. From construction of $\cP$ we have that $\cMin{(S',\cP)} \not\in I(S,\cP)$ Because every in-branch $H'$ for $S'$ either waits in $I(S,\cP)$ or $[\graphStart{H'}{\cP}, \graphEnd{H'}{\cP}] \cap I(S,\cP) = \emptyset$, we obtain that $I(S,\cP') \cap I(S',\cP') = \emptyset$ and $\cP'$ is $S$-structured. 

\paragraph*{Case C: $S \subsetneq S'$.} From Lemma~\ref{lem:subsets_bottlenecks} there exists an $S$-branch $H$ such that $\bigcup_{H' \in \branches{S'}} V(H') \cup S'  \setminus S \subseteq V(H)$, i.e., $I(S',\cP) \subseteq [\graphStart{H}{\cP}, \graphEnd{H}{\cP}]$. We observe that $H$ cannot be an in-branch for $S$, otherwise $I(S',\cP) \subseteq [\graphStart{H}{\cP}, \graphEnd{H}{\cP}] \subseteq I(S, \cP)$, which contradicts the assumption that $I(S,\cP) \subsetneq I(S',\cP)$.

Let then $H$ be a pre-branch or a post-branch for $S$, which means that $H$ waits in $I(S,\cP)$ (notice that it is impossible that $[\graphStart{H}{\cP}, \graphEnd{H}{\cP}] \cap I(S,\cP) = \emptyset$, because then  $ I(S',\cP) \cap I(S,\cP) = \emptyset$). From construction of $\cP$ we have that $\cMin{(S',\cP)} \not\in I(S,\cP)$. Because every in-branch $H'$ for $S'$ either waits in $I(S,\cP)$ or $[\graphStart{H'}{\cP}, \graphEnd{H'}{\cP}] \cap I(S,\cP) = \emptyset$, we obtain that $I(S,\cP') \cap I(S',\cP') = \emptyset$ and $\cP'$ is $S$-structured. 
\end{proof}

Now we are ready to prove  Lemma~\ref{lem:main_transformation}.

\begin{proof}[Proof of Lemma~\ref{lem:main_transformation}.] Let $\cP$ be an $\cI$-connected path decomposition. Let $\cS = \{S_1,S_2,\ldots,S_{n'}\}$ be the set of all bottlenecks.
We define a path decomposition $\cP' := \cP_{n'}$ in the following, recursive way: 
\begin{align*}
\cP_0 &= \cP; \\
\cP_i &= F(S_i,\cP_{i-1}),\ \text{ for } i \in \{1,\ldots,n'\}. 
\end{align*}

We are going to prove now that $\cP_q$ is $S_j$-structured and $S_i, S_j$ are well-nested in $\cP_q$
for every $1 \leq j \leq q \leq n'$ and $1 \leq i \leq n'$. 

Induction on $q$. If $q=1$, then obviously $j=1$ and thesis holds for every $1 \leq i \leq n'$ from Lemma~\ref{lem:structured}.
So assume that $q > 1$ and the claim holds for $q - 1$. 

Let $j \in \{1,\ldots,q - 1\}$. For every $S_j$ 
we have, by the induction assumption, that $S_q, S_j$ are well-nested in $\cP_{q-1}$ and $\cP_{q-1}$ is $S_j$-structured.
By Lemma~\ref{lem:final_structured2}, if $I(S_j,\cP_{q-1}) \subsetneq I(S_q,\cP_{q-1})$, then  $\cP_q = F(S_q,\cP_{q-1})$ is $S_j$-structured.
By Lemma~\ref{lem:final_structured}, if $I(S_q,\cP_{q-1}) \subseteq I(S_j,\cP_{q-1})$  or  $I(S_j,\cP_{q-1}) \cap I(S_q,\cP_{q-1}) = \emptyset$, then  $\cP_q = F(S_q,\cP_{q-1})$ is $S_j$-structured.

Because $\cP_q$ is $S_j$-structured, then from Lemma~\ref{lem:structured} for any $i \in \{1,\ldots,n'\}$ we have that $S_i, S_j$ are well-nested in $\cP_q$.

So $\cP'$ is $S$-structured for every bottleneck $S \in \cS$. 
Note that Lemma~\ref{lem:reorganizing} ensures that $\cP'$ is an $\cI$-connected path decomposition of width at most $\width{\cP}$, which finishes the proof.
\end{proof}

Now we are ready to show the correctness of our algorithm. We will prove it in two steps.

\begin{lemma} \label{lem:induction-dec-totab}
If $G$ has an $\cI$-connected path decomposition of width at most $k-1$, then $Tab[s]=true$ for some state $s$ such that $\cover{s} = V(G)$.
\end{lemma}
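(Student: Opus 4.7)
My plan is to apply Lemma~\ref{lem:main_transformation} to obtain a structured $\cI$-connected path decomposition $\cP = (X_1, \ldots, X_l)$ of width at most $k-1$. By Lemma~\ref{lem:structured}, the bottleneck intervals $\{I(S,\cP)\}_{S \in \cS}$ then form a laminar family, and an explicit inspection of the construction of $\cP$ in Lemma~\ref{lem:reorganizing} (iterated in the proof of Lemma~\ref{lem:main_transformation}) shows that for every bottleneck $S'$ the bags $X_{\botStartS{S'}-1}$ and $X_{\botEndS{S'}+1}$ are equal and both contain $S'$. This identity is what will enable the jump extensions.

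Next, I would call an index $i \in \{1,\ldots,l\}$ \emph{admissible} if either $i = l$ or $i \notin [\botStartS{S},\botEndS{S}]$ for every bottleneck $S$, and define for each admissible $i$ a state $s(i)$ by $\bag{s(i)} = X_i$ and, for every non-empty $S \subseteq X_i$: $B_S$ is the set of non-in-branches of $S$ in $\cP$ if $S$ is a bottleneck (at most $2k$ by Lemma~\ref{lem:small}), and $B_S = \branches{S}$ (of size at most $2k$) otherwise; and $f_S^B(H) = 1$ iff $V(H) \subseteq X_1 \cup \cdots \cup X_i$. Admissibility ensures that all in-branches of any bottleneck $S$ are simultaneously either fully covered (if $i \geq \botEndS{S}$) or uncovered (if $i \leq \botStartS{S}-1$) by the prefix, so $f_S^B$ is constant on $\branches{S} \setminus B_S$ and $s(i)$ is well-defined. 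Elementary checks then give $\cover{s(i)} = X_1 \cup \cdots \cup X_i$ and, from the $\cI$-connectivity of $\cP$, that every connected component of $G_{s(i)}$ contains $\startv$.

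I then prove $Tab[s(i)] = true$ by induction on admissible $i$ in increasing order. The base case $i=1$ is immediate from the initialization since $\bag{s(1)} = \cover{s(1)} = \{\startv\}$. For the inductive step, let $i'$ be the previous admissible index. If $i = i' + 1$, I invoke the step extension from $s(i')$ to $s(i)$ and verify~\ref{step:1}--\ref{step:4} directly from the path decomposition axioms and the connectivity of $\cP$. Otherwise $[i'+1,i-1] = I(S',\cP)$ for an outermost bottleneck $S'$ by laminarity, and I invoke the jump extension with $S'$. The bag identity $X_{i'} = X_i$ provides the common bag $X$; conditions~\ref{jump:green}--\ref{jump:other} hold because only the $f^B_{S'}$-values on in-branches of $S'$ flip from $0$ to $1$ between $s(i')$ and $s(i)$; and~\ref{jump:rec} is witnessed for each in-branch $H$ by $\cP_H := (X_{\graphStartS{H}} \cap V(H), \ldots, X_{\graphEndS{H}} \cap V(H))$, which is $N_G(S') \cap V(H)$-connected (since $\cP$ is $\cI$-connected and $V(H)$ is a component of $G-S'$) and has width at most $k - |X| - 1$: every bag $X_j$ with $j \in I(S',\cP)$ contains $S' \cup X^* = X$, where $X^*$ is the waiting non-in-branch content, giving $|X_j \cap V(H)| \leq |X_j| - |X| \leq k - |X|$.

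The main obstacle is verifying the width bound in~\ref{jump:rec}, which hinges on the precise decomposition of bags inside $I(S',\cP)$ into $S'$, the waiting non-in-branch part $X^*$, and in-branch vertices, together with the identity $X = S' \cup X^*$; this structural decomposition is exactly what the $F$-transformation in Lemma~\ref{lem:reorganizing} guarantees. A secondary subtlety is handling nested bottleneck intervals inside the in-branches, but since~\ref{jump:rec} requires only the \emph{existence} of $\cP_H$ and not its algorithmic construction, this is absorbed implicitly without a separate recursive argument.
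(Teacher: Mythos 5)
Your proposal is correct and takes essentially the same route as the paper's own proof: transform to a decomposition that is $S$-structured for every bottleneck via Lemma~\ref{lem:main_transformation}, use Lemma~\ref{lem:structured} to get laminarity, induct over the indices lying outside the maximal bottleneck intervals, apply step extensions between consecutive indices, and leap over each maximal interval with a jump extension whose condition~\ref{jump:rec} is witnessed by restricting $\cP$ to the in-branches. The only (cosmetic) differences are that you land the jump on the duplicated bag just after the interval, exploiting the bags inserted in \eqref{eq:minimal1} and \eqref{eq:minimal2}, whereas the paper lands on $\botEndS{S_i}$ itself, and that you spell out the width bound in~\ref{jump:rec} and the boundary-bag identity more explicitly than the paper does.
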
 
\begin{proof}
Suppose that $G$ has an $\cI$-connected path decomposition of width $k-1$.
By Lemma~\ref{lem:main_transformation}, there exists an $\cI$-connected path decomposition $\cP=(X_1,\ldots,X_l)$ that has width $k-1$ and is $S$-structured for each bottleneck $S$.

By Lemma~\ref{lem:structured}, the set $\cS$ of all bottlenecks with relation $S \prec S'$ if and only if $I(S,\cP) \subseteq I(S',\cP)$ forms a partial order (a assuming that any ties, i.e., when $I(S,\cP)=I(S',\cP)$ are resolved arbitrarily).
Let $S_1,\ldots,S_t$ be the maximal elements with respect to this partial order.
Note that for $i\neq j$, $I(S_i,\cP)\cap I(S_j,\cP) = \emptyset$ and for any bottleneck $S'\notin\{S_1,\ldots,S_t\}$ we have $I(S',\cP)\subseteq I(S_i,\cP)$ for some $i\in\{1,\ldots,t\}$.
Assume without loss of generality that the `maximal' bottlenecks are ordered according to the left endpoints of their intervals,
\[\botStartS{S_1} \leq \botStartS{S_2} \leq \cdots \leq \botStartS{S_t}.\]

We show how to arrive at the desired state $s$.
To that end we argue, by induction on $j$, that for each
\[j\in J:=\{1,\ldots,l\}\setminus\bigcup_{i=1}^{k}\left\{\botStartS{S_i},\ldots,\botEndS{S_i}-1\right\}\]
there exists a state $s_j$ such that $\cover{s_j}=G[X_1\cup\cdots\cup X_j]$, $\bag{s_j}=X_j$ and $Tab[s_j]=true$.

Since the first bag of $\cP$ consists of a vertex in $\cI$, this clearly holds for $j=1$ so take $j>1$ and assume that the claim is true for each $j'\in J\cap\{1,\ldots,j-1\}$.
We consider two cases.
In the first case suppose that $j\notin I(S_i,\cP)$ for each $i\in\{1,\ldots,t\}$.
Hence we have $j-1\in J$.
This implies, according to Lemma~\ref{lem:small}, that for each bottleneck $S$, either there are at most $2k$ $S$-branches $H$ such that $\graphStartS{H}\leq j$, or there are at most $2k$ $S$-branches such that $j\leq\graphEndS{H}$.
Thus, there exists a state $s_{j}$ such that $\cover{s_{j}}=G[X_1\cup\cdots\cup X_{j}]$ and $\bag{s_{j}}=X_{j}$.
The step extension rule and $Tab[s_{j-1}]=true$, which holds by the induction hypothesis, imply $Tab[s_{j}]=true$ as required.

In the second case we have $j\in I(S_i,\cP)$ for some bottleneck $S_i$.
By the definition of the set $J$, $j=\botEndS{S_i}$.
Hence, the preceding index of $j$ in the set $J$ is $j'=\botStartS{S_i}-1$.
Again, by the definition of $I(S_i,\cP)$, Lemma~\ref{lem:small}, and the maximality of $S_i$ with respect to the partial order, we have that for each bottleneck set $S$ either at most $2k$ $S$-branches $H$ satisfy $\graphStartS{H}\leq j'$, or at least $|\branches{S_{i'}}|-2k$ $S$-branches $H$ are contained in $[1,j']$, i.e., satisfy $\graphEndS{H}\leq j'$, depending whether $\botEndS{S}\leq j'$ or $\botStartS{S}>j'$.
Thus, there exists a state $s_{j}$ such that $\cover{s_j}=G[X_1\cup\cdots\cup X_j]$ and $\bag{s_j}=X_j$.
Consider the jump extension rule constructed for $S'=S_i$.
For the set $B_{S'}$ in~\ref{jump:green} and~\ref{jump:otherS} take all $S'$-branches that are not in-branches, i.e., those that are covered in $[j'+1,j]$ in $\cP$.
Note that each $S$-branch of each bottleneck $S\neq S_i$ such that $S\subseteq X_{j'}$ waits in the interval $I(S_i,\cP)$ because $\cP$ is $S_i$-structured, which ensures the condition \ref{jump:other}.
Condition~\ref{jump:rec} holds because the decomposition $\cP_H$ in~\ref{jump:rec} exists which is certified by the decomposition $\cP$, namely $\cP_H=(X_{\graphStartS{H}}\cap V(H),\ldots,X_{\graphEndS{H}}\cap V(H))$.
Thus, $Tab[j']=true$ (which holds by the induction hypothesis) ensures that $Tab[j]=true$.

Finally observe that $l\in J$, $\cover{s_{l}}=G[X_1\cup\cdots\cup X_l]=V(G)$ and $Tab[s_l]=true$.
Thus, $s=s_l$ is the required state.
\end{proof}

\begin{lemma} \label{lem:induction-tab-todec}
For any state $s$, if $Tab[s]=true$, then $G_s$ has an $\cI$-connected path decomposition of width at most $k-1$.
\end{lemma}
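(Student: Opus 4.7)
I will prove the claim by strong induction on the $\precs$-ordering of states. In the base case, $Tab[s]$ was set to $true$ at initialization, so $\bag{s} = \cover{s} = \{v\}$ for some $v \in \cI$, and the single-bag decomposition $(\{v\})$ is trivially $\cI$-connected, has width $0 \leq k-1$, and ends in $\bag{s}$. For the inductive step, $Tab[s]$ was set to $true$ by either the step or the jump extension applied to an earlier state $w$ with $Tab[w] = true$, and the induction hypothesis yields an $\cI$-connected path decomposition $\cP = (X_1, \ldots, X_l)$ of $G_w$ of width at most $k-1$ with $X_l = \bag{w}$, which I extend into the required witness for $s$.

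In the step extension case, I take $\cP' := (X_1, \ldots, X_l, \bag{s})$. The width remains at most $k-1$ since $|\bag{s}| \leq k$, and vertex coverage follows from \ref{step:3}. For an edge $\{u,v\}$ of $G_s$ with $v \in \cover{s} \setminus \cover{w} \subseteq \bag{s}$, the endpoint $u$ also lies in $\bag{s}$: either directly if $u \notin \cover{w}$, or via Observation~\ref{obs:cover} combined with \ref{step:2}, which force $u \in \border{w} \subseteq \bag{s}$. The interval condition holds thanks to \ref{step:4}: any $v \in \bag{s} \cap \cover{w}$ lies in $\bag{w} = X_l$, so no gap can arise. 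The $\cI$-connectedness of prefixes of length at most $l$ is inherited from $\cP$ (the induced subgraphs on these prefixes coincide in $G_s$ and in $G_w$), and for the full prefix the components of $G_s$ contain $\cI$-vertices by the definition of a state.

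In the jump extension case, \ref{jump:rec} supplies, for each $H \in \branches{S'} \setminus B_{S'}$, an $(N_G(S') \cap V(H))$-connected path decomposition $\cP_H = (Y^H_1, \ldots, Y^H_{m_H})$ of $H$ of width at most $k - |X| - 1$. I list these branches as $H_1, \ldots, H_p$, set $Z^{H_j}_i := Y^{H_j}_i \cup X$, and form
\[
\cP' := (X_1, \ldots, X_l,\; Z^{H_1}_1, \ldots, Z^{H_1}_{m_{H_1}},\; Z^{H_2}_1, \ldots, Z^{H_p}_{m_{H_p}},\; X).
\]
The bound $|Z^{H_j}_i| \leq (k - |X|) + |X| = k$ keeps the width at most $k-1$. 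Conditions \ref{jump:green}--\ref{jump:other} give $\cover{s} = \cover{w} \cup \bigcup_j V(H_j)$, so all vertices are covered; every edge of $G_s$ is handled either by $\cP$, by some $\cP_{H_j}$, or by the fact that any edge joining $V(H_j)$ to $\cover{s} \setminus V(H_j)$ necessarily passes through $S' \subseteq X$, which sits in every $Z$-bag (and distinct $V(H_j)$ are pairwise non-adjacent). For the $\cI$-connectedness of a prefix ending inside the $H_j$-block, the $(N_G(S') \cap V(H_j))$-connectedness of $\cP_{H_j}$ forces each component of the partial $H_j$-part to contain a vertex adjacent to $S' \subseteq \cover{w}$, merging it with a component of $G_w$ that by induction contains an $\cI$-vertex; completed branches $H_a$ with $a < j$ are handled similarly using $N_G(V(H_a)) = S'$.

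The main obstacle is verifying the interval condition \ref{it:path3} in the jump case. Embedding the whole $X$ in each $Z$-bag and appending a final copy of $X$ keeps every vertex of $X$ in one contiguous block. The delicate residual danger is a vertex $v \in \cover{w} \setminus X$ that happens to lie in some $V(H_j)$ with $f^B_{S'}(H_j) = 0$, which would make $v$ reappear inside the inserted block and create a gap. Ruling this out exploits the structural observation that each connected component of $G - X$ is an $S$-branch for a unique $S \subseteq X$, together with the consistency constraints on the functions $f^B_S$ in the state $w$, which together prevent $v$ from being simultaneously covered via some other $S$-branch strictly containing its component. Once this bookkeeping is settled, the remaining properties are direct consequences of the definitions and the extension rules.
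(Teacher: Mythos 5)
Your proof follows the paper's argument essentially step by step: the same induction on the $\precs$-ordering, the same construction $(X_1,\ldots,X_l,\bag{s})$ for a step extension, and the same construction for a jump extension (insert the bags $X\cup X^{H}_j$ for the branches $H\in\branches{S'}\setminus B_{S'}$, then a final copy of $X$), with matching width, coverage, edge and $\cI$-connectivity arguments. One smaller omission: you apply the induction hypothesis to ``an earlier state $w$'' without showing $w\precs s$. The paper verifies this explicitly in both cases (for step extension, either $\bag{s}\not\subseteq\bag{w}$, whence \ref{step:3} and \ref{step:4} give $|\cover{w}|<|\cover{s}|$, or $\bag{s}\subsetneq\bag{w}$ by distinguishability, whence $\cover{s}=\cover{w}$ and $|\bag{s}|<|\bag{w}|$; for jump extension, $|\cover{w}|<|\cover{s}|$ because $\branches{S'}\setminus B_{S'}\neq\emptyset$). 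Since the statement of the extension rules alone does not hand you this ordering, the verification is needed for the induction to go through.

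The more serious problem is your last paragraph. You correctly isolate the delicate point in the jump case --- a vertex $v\in\cover{w}\setminus X$ lying in some $H_j$ with $f^B_{S'}(H_j)=0$ would reappear inside the inserted block and violate condition \ref{it:path3} --- but the mechanism you invoke to exclude it does not exist. The definition of a (potential) state only requires that $f^B_S$ be constant on $\branches{S}\setminus B_S$ for each fixed $S$; there is no consistency condition relating $f^B_S$ and $f^B_{S'}$ for different subsets $S,S'\subseteq X$. Moreover, $S$-branches for distinct $S$ are components of different graphs $G-S$, not of $G-X$, so the observation that each component of $G-X$ is a branch of a unique $S$ does not by itself prevent a vertex of an $S'$-branch with $f^B_{S'}=0$ from also lying inside an $S$-branch with $f^B_S=1$. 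So the sentence ``once this bookkeeping is settled'' defers exactly the step that is missing, and as sketched the argument would not go through. The paper itself is terse here --- it asserts that \ref{it:path3} follows from the definitions of $\cP$, the $\cP_H$, and the fact that the appended subgraphs are $S'$-branches --- but it does not rest the claim on constraints the definitions do not provide; your write-up does, and that is a genuine gap rather than a mere compression.
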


\begin{proof}
Proof by induction on the position of $s$ in the ordering $\precs$. First, let $\cover{s} = \{v\}$ for some $v \in V(G)$ (notice that such states are smallest, according to $\precs$).
If $v \in \cI$, then $Tab[s]$ was set true in the initialization step. This is justified by considering connected path decomposition consisting of a single bag $\{v\}$, which is a proper connected path decomposition of the single-vertex graph $(\{v\},\emptyset)$.
On the other hand, if $v \notin \cI$, then $Tab[s]$ is never set $true$, as the extension rules apply only to states with $|\cover{s}| > 1$.
This is also correct, as no decomposition of $G_s$ contains a vertex of $\cI$.

Now suppose that $|\cover{s}| \geq 2$, and the Lemma holds for all states $w \precs s$. Since $Tab[s]=true$, its value must have been set by one of the extension rules. Consider two cases.

\paragraph{Case 1: $Tab[s]$ was set by step extension.}
Consider the state $w$. If $\bag{s} \not\subseteq \bag{w}$, then since by~\ref{step:4}, $\bag{s} \cap \cover{w} \subseteq \bag{w}$, we have that $\bag{s} \not\subseteq \cover{w}$. Therefore, $\cover{s} = \cover{w} \cup \bag{s}$ implies that $|\cover{s}| > |\cover{w}|$, which means that $w \precs s$.
On the other hand, if $\bag{s} \subseteq \bag{w}$, we have $\cover{s} = \cover{w}$ due to~\ref{step:3}. However, since $s$ and $w$, are distinguishable, we have $\bag{s} \subsetneq \bag{w}$, so $|\bag{s}| < |\bag{w}|$ and thus $w \precs s$.

So, by the inductive assumption, $Tab[w]$ was set properly and there exists an $\cI$-connected path decomposition $\cP = (X_1,X_2,\ldots,X_l)$ of $G_w$ that has width at most $k-1$, where $X_l = \bag{w}$. Let $X_{l+1} := \bag{s}$ and let $\cP' := (X_1,X_2,\ldots,X_l,X_{l+1})$.

We claim that $\cP'$ is an $\cI$-connected path decomposition of $G_s$.
Indeed, $\bigcup_{i=1}^{l+1} X_i = \bigcup_{i=1}^{l} X_i \cup X_{l+1} = \cover{w} \cup \bag{s} = \cover{s}$. 
Now, consider an edge $vu$ of $G_s$. If both $v,u$ belong to $\cover{w}$, they appear in some $X_i$ for $i \leq l$ (by the inductive assumption). If both $v,u$ belong to $\bag{s}$, we are done too. Finally, if $v \in \cover{w}$ and $u \in \bag{s} \setminus \cover{w}$, we know from~\ref{step:2} that $v \in \border{w} \subseteq \bag{s}$, so we are again in the previous case.
Now suppose for a contradiction that there are some $1 \leq i < j \leq l$, such that $X_i \cap X_{l+1} \not\subseteq X_j$. This means that $\bag{s}=X_{l+1}$ contains a vertex of $\cover{w} \setminus \bag{w}$, which is a contradiction with~\ref{step:4}.

Moreover, since $\width{\cP} \leq k-1$ and $|\bag{s}| \leq k$, we have $\width{\cP'} \leq k-1$. Finally, since $\cP$ is $\cI$-connected and (according to~\ref{step:1}) every connected component of $X_{l+1}$ either contains a vertex of $X_l$, or is adjacent to one, or belongs to $\cI$, we observe that $\cP'$ is also $\cI$-connected.
Finally, we immediately obtain from induction hypothesis that $\cP'$ starts with $\cI$.
This justifies setting $Tab[s]=true$.

\paragraph{Case 2: $Tab[s]$ was set by jump extension.}
Let $w=(X,\{B_S\}_S,\{f^B_S\}_S)$, $s=(X,\{B_S\}_S,\{g^B_S\}_S)$ and let $S' \subseteq X$ be defined as in the definition of the jump extension. To simplify the notation, set $\cB' := \branches{S'} \setminus B_{S'} = \{H_1,H_2,\ldots,H_m\}$.
Observe that since $S'$ is a bottleneck, we have $|\branches{S'}| \geq 2k+1$, thus there is at least one $H \in \cB'$.
Since $V(H) \not\subseteq \cover{w}$ and $V(H) \subseteq \cover{s}$ by~\ref{jump:green} and \ref{jump:other}, we have $|\cover{w}| < |\cover{s}|$ and thus $w \precs s$. So, by the inductive assumption, $Tab[w]$ was set properly to be $true$ and there exists an $\cI$-connected path decomposition $\cP = (X_1,X_2,\ldots,X_l)$ of $G_w$ with width at most $k-1$, where $X_l = X = \bag{w}$. By~\ref{jump:rec}, for every $H \in \cB'$ there is a path decomposition $\cP_H = (X^H_1,X^H_2,\ldots,X^H_{l(H)})$ of width at most $k-|X|-1$, such that $X^H_1$ contains a neighbor of $S'$, i.e., $\cP_H$ is $N_G(S')\cap V(H)$-connected.

We claim that
\[ \cP' = \cP \circ \left(\prod_{i=1}^m \prod_{j=1}^{l(H_i)} (X_l \cup X^{H_i}_j)\right) \circ X_l,
\]
where both $\circ$ and $\prod$ denote concatenation of appropriate sequences, is an $\cI$-connected path decomposition of $G_s$ of width at most $k-1$.

First, observe that $\cover{s} = \cover{w} \cup \bigcup_{H \in \cB'} V(H)$ due to~\ref{jump:otherS} and~\ref{jump:other}. By the definition of $\cP$ and decompositions $\cP_H$ for $H \in \cB'$, we observe that $\cP'$ covers exactly $\cover{s}$.

Now consider an edge $vu$ of $G_s$. If both vertices $v,u$ belong to $\cover{w}$, or to $V(H)$ for some $H \in \cB'$, then, by the definition of $\cP$ and $\cP_H$, both $v$ and $u$ appear in some bag of the decomposition $\cP'$.
If $v \in \cover{w}$ and $u \in V(H)$ for some $H \in \cB'$, then we know that $v \in \border{w}$ and therefore $v \in X_l$, so both vertices appear in every bag containing $u$. Finally, we observe that there are no edges joining vertices from different $S'$-branches.

The third condition of the definition of path decomposition follows directly from the definition of $\cP$ and $\cP_H$ and the fact that subgraphs $H$ are $S'$-branches.

Observe that $|X_i| \leq k-1$ for $i \leq l$ (by the definition of $\cP$), and since $|X^H_j| \leq k-|X|-1$ for every $H$ and $j$, we have $| X \cup X^H_j| \leq k-1$, so $\width{\cP'} \leq k-1$.

Note that $\cP$ is $\cI$-connected according to the induction hypothesis. Moreover, each $\cP_H$ is $N_G(S')\cap V(H)$-connected for each $H \in \cB'$ for some $S' \subseteq X$.
Thus, $\cP'$ is $\cI$-connected.

This completes the proof.
\end{proof}

Combining Lemmas \ref{lem:induction-dec-totab} and \ref{lem:induction-tab-todec}, we obtain the following corollary.
\begin{corollary}\label{cor:correct}
The algorithm is correct, i.e., the value of $Tab[s]$ is $true$ for some state $s$ with $\cover{s}=V(G)$ if and only if $\cpw{G}\leq k-1$.
\qed
\end{corollary}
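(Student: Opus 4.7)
The plan is to obtain Corollary~\ref{cor:correct} essentially for free by assembling the two preceding lemmas, which together cover both directions of the claimed equivalence. I would state the proof as a short two-part argument, one implication per lemma, plus a one-line observation that reconciles $\cI$-connectivity (which is what the lemmas handle) with ordinary connectivity (which is what $\cpw{G}$ refers to).

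For the ``only if'' direction, I would start from a state $s$ with $\cover{s} = V(G)$ and $Tab[s] = true$. By Lemma~\ref{lem:induction-tab-todec}, $G_s = G$ admits an $\cI$-connected path decomposition of width at most $k-1$. Since $\cI = \{\startv\}$ is a singleton, the $\cI$-connectivity condition says that every connected component of each prefix-induced subgraph contains $\startv$; a single vertex can lie in only one component, so each prefix is in fact connected. Thus the decomposition is a genuine connected path decomposition and $\cpw{G} \le k-1$.

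For the ``if'' direction, assume $\cpw{G} \le k-1$. Using the observation from the beginning of Section~\ref{sec:algorithm} that the first bag of a connected path decomposition may without loss of generality be taken to be a single vertex $\startv$, and noting that the outer algorithm tries all $n$ choices of $\startv$, for the correct choice we obtain a connected path decomposition of width at most $k-1$ whose first bag is $\{\startv\} = \cI$. This is precisely an $\cI$-connected path decomposition of width at most $k-1$, so Lemma~\ref{lem:induction-dec-totab} produces the required state $s$ with $\cover{s} = V(G)$ and $Tab[s] = true$.

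There is no real obstacle here: the heavy lifting (the structural Lemma~\ref{lem:main_transformation} and the two inductive Lemmas~\ref{lem:induction-dec-totab} and~\ref{lem:induction-tab-todec}) has already been done. The only point requiring any care is the routine identification of $\{\startv\}$-connected decompositions with connected decompositions starting at $\startv$, which is immediate from the definitions.
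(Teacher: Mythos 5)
Your proposal is correct and follows the same route as the paper, which obtains the corollary directly by combining Lemmas~\ref{lem:induction-dec-totab} and~\ref{lem:induction-tab-todec}; your extra remark that $\{\startv\}$-connectivity coincides with ordinary connectivity (and that $\startv$ is guessed exhaustively) is a fine, harmless elaboration of what the paper leaves implicit.
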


Now let us estimate the computational complexity of our algorithm.
\begin{lemma}\label{lem:complexity}
For every fixed $k \geq 1$, a graph $G$ with $n$ vertices, and $\cI \subseteq V$, there is an algorithm deciding in time $f(k) \cdot n^{O(k^2)}$ whether $G$ has an $\cI$-connected path decomposition of width at most $k-1$, where $f$ is a function depending on $k$ only.
\end{lemma}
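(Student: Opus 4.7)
The plan is to bound the running time by a recursion on the width parameter $k$, exploiting the fact that the recursive calls triggered by the jump extension rule operate with a strictly smaller width.

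First, I would bound the number of potential states: $X$ has $\binom{n}{\leq k}=O(n^k)$ choices, and for each of the $2^k-1$ non-empty subsets $S\subseteq X$ we pick $B_S\subseteq\branches{S}$ in $O(n^{2k})$ ways and $f^B_S$ in $2^{O(k)}$ ways, giving a bound of the form $N_k\leq f_1(k)\cdot n^{h(k)}$ for some polynomial exponent $h(k)$. Initialization of $Tab$ and processing of states in the ordering $\precs$ are then straightforward in polynomial time per state.

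Second, I would analyze the per-state cost of the two extension rules. For step extension on a state $s$, I iterate over all states $w$ with $w\precs s$ and verify conditions~\ref{step:1}--\ref{step:4} in polynomial time per pair, contributing an overhead bounded by $N_k^2\cdot\mathrm{poly}(n)$. For jump extension on a state $s=(X,\{B_S\}_S,\{g^B_S\}_S)$, for each subset $S'\subseteq X$ (at most $2^k$ choices) the companion state $w$ is determined by flipping the values of $g^B_{S'}$ on $\branches{S'}\setminus B_{S'}$, and conditions~\ref{jump:green}--\ref{jump:other} are checked directly. The critical step is condition~\ref{jump:rec}, which requires deciding whether each $H\in\branches{S'}\setminus B_{S'}$ admits an $N_G(S')\cap V(H)$-connected path decomposition of width at most $k-|X|-1$.

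Third, I would precompute these subproblems via recursion. For every $S'\subseteq V(G)$ with $|S'|\leq k$, every $H\in\branches{S'}$, and every width $k'\in\{1,\ldots,k-1\}$, I decide whether $H$ has an $(N_G(S')\cap V(H))$-connected path decomposition of width at most $k'-1$; the number of such triples is $O(n^{k+1})$. Because $|X|\geq|S'|\geq 1$, every recursive call strictly reduces the width parameter, so the recursion depth is at most $k$. Letting $T(k,n)$ denote the running time on an $n$-vertex graph, the above yields
\[
T(k,n)\leq f_1(k)\cdot n^{O(k)} \;+\; O(n^{k+1})\cdot T(k-1,n),
\]
with base case $T(1,n)=\mathrm{poly}(n)$. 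Unrolling over the $k$ levels gives $T(k,n)\leq f(k)\cdot n^{O(k^2)}$, and combined with Corollary~\ref{cor:correct} this establishes the lemma. The main obstacle is the careful bookkeeping of the $2^k$ factor arising from the enumeration of subsets $S\subseteq X$ (which influences both the state count and the jump extension inner loop): it must be absorbed cleanly into $f(k)$ rather than into the exponent of $n$, so that each level of the recursion contributes only an additional $n^{O(k)}$-factor and the total exponent remains $O(k^2)$.
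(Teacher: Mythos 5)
You take essentially the same route as the paper: induct on the width parameter $k$, bound the number of states, verify step extension in polynomial time per pair of states, and resolve condition~\ref{jump:rec} of the jump extension by a recursive call at strictly smaller width (since $|X|\geq 1$). Your choice to precompute and memoize the $O(n^{k+1})$ subproblems arising in~\ref{jump:rec} rather than to recurse inline on each query, and your explicit recurrence $T(k,n)\leq f_1(k)\,n^{O(k)}+O(n^{k+1})\,T(k-1,n)$, are organizational variants of what the paper does; both unroll to $f(k)\cdot n^{O(k^2)}$. One point worth flagging: you correctly identify the $2^k$-factor from enumerating non-empty $S\subseteq X$ as the delicate step, and this concern is genuine. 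A potential state carries a pair $(B_S,f^B_S)$ for \emph{every} non-empty $S\subseteq X$, so a direct count gives up to $n^{k}\cdot\bigl(n^{2k}\cdot 2^{2k+1}\bigr)^{2^k-1}=n^{O(k\cdot 2^k)}$ potential states, not the $O(n^{3k})$ asserted in Section~3.1 (the paper multiplies by $2^k$ where it should exponentiate). You assert, but do not show, that the $2^k$-factor ``must be absorbed cleanly into $f(k)$''; since it sits in the exponent of $n$, this absorption is not automatic. So strictly speaking neither your argument nor the paper's establishes the stated $n^{O(k^2)}$ exponent---both give $f(k)\cdot n^{p(k)}$ for some function $p$ of $k$ alone, which is still enough for Theorem~\ref{thm:main}.
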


\begin{proof}
We do induction on $k$.
First, observe that for a connected graph $G$, $\cpw{G} = 0$ if and only if $G$ is a single-vertex graph.
Moreover, $\cpw{G} = 1$ if and only if $G$ is a caterpillar, and optimal connected path decompositions of caterpillars have very simple structure, so we can verify in polynomial time whether there is an $\cI$-connected one.

So assume that $k \geq 2$ and that the claim holds for $k-1$.
For every vertex $s^* \in V$, we run the dynamic programming algorithm that we described in Section~\ref{sec:algorithm}.
The correctness of the algorithm follows from Corollary \ref{cor:correct}. Now let us estimate its computational complexity.
Recall that the total number of states is $O(n^{3k})$, so the total number of pairs of states is $O(n^{6k})$. For each pair of states we check if one of the two extension rules can be applied.

Observe that for each state $s$, we can compute $\cover{s}$, $\bag{s}$ and $\border{s}$ in polynomial time. Thus checking if the step extension can be applied can also be done in polynomial time.

Now consider the possible jump extension from a state $w$ to a state $s$.
Verifying the first three conditions can be clearly done in polynomial time. We check in~\ref{jump:rec} if the appropriate path decomposition $\cP_H$ of each $S'$-branch $H$ exists by calling the algorithm recursively with the initial set $N_G(S')\cap V(H)$. By the inductive assumption, this can be done in total time bounded by $n^{O(1)} \cdot f'(k-1) n^{c \cdot (k-1)^2}$, for some function $f'$ and a constant $c'$.
This gives total time complexity 
\[n^{O(1)} \cdot n^{6k} \cdot f'(k-1) \cdot n^{c' (k-1)^2} = f(k) \cdot n^{O(k^2)}\]
for some function $f$.
\end{proof}

Now, the main result of the paper follows easily from Lemma \ref{lem:complexity}.
\begin{theorem}\label{thm:main}
For every fixed $k \geq 1$, there is an algorithm deciding in time $f(k) \cdot n^{O(k^2)}$ whether $\cpw{G} \leq k-1$, for some function $f$ depending on $k$ only, i.e., in time polynomial in $n$.
\end{theorem}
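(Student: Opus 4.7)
The plan is to derive Theorem~\ref{thm:main} as an immediate corollary of Lemma~\ref{lem:complexity}. The only apparent gap is that Lemma~\ref{lem:complexity} decides the existence of an $\cI$-connected path decomposition of width at most $k-1$ for a prescribed set $\cI$, whereas the theorem concerns $\cpw{G}$ without any such constraint. I would bridge this gap by invoking the observation made at the beginning of Section~\ref{sec:algorithm}: without loss of generality, the first bag of a connected path decomposition to be computed may be taken to consist of a single vertex $\startv \in V(G)$.

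The algorithm is therefore the following: enumerate every candidate first-bag vertex $\startv \in V(G)$ and, for each such choice, invoke the algorithm of Lemma~\ref{lem:complexity} with input $G$, $k$, and $\cI := \{\startv\}$. Output \emph{yes} if at least one such invocation returns \emph{yes}, and \emph{no} otherwise.

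For correctness, I would argue as follows. If $\cpw{G} \leq k-1$, there exists a connected path decomposition $(X_1,\ldots,X_l)$ of $G$ of width at most $k-1$, and by the mentioned reduction we may assume $X_1 = \{\startv\}$ for some $\startv \in V(G)$; this decomposition is in particular an $\{\startv\}$-connected path decomposition of $G$, so the iteration for this particular $\startv$ will return \emph{yes} by Lemma~\ref{lem:complexity}. Conversely, every $\{\startv\}$-connected path decomposition of $G$ of width at most $k-1$ is by definition a connected path decomposition of $G$ of the same width, witnessing $\cpw{G} \leq k-1$.

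For the running time, each of the at most $n$ invocations of the algorithm of Lemma~\ref{lem:complexity} costs $f'(k) \cdot n^{O(k^2)}$ time for some function $f'$ depending only on $k$, yielding total running time $n \cdot f'(k) \cdot n^{O(k^2)} = f(k) \cdot n^{O(k^2)}$ for an appropriate function $f$, as required. There is essentially no obstacle to overcome: all of the heavy lifting, both algorithmic (the dynamic programming over states with two extension rules) and structural (the existence of a $S$-structured path decomposition for every bottleneck $S$, guaranteed by Lemma~\ref{lem:main_transformation}), has already been done in the preceding lemmas.
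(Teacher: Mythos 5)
Your proposal is correct and is essentially the paper's own proof: the paper likewise derives Theorem~\ref{thm:main} by exhaustively guessing the single-vertex first bag $\startv\in V(G)$ and invoking the algorithm of Lemma~\ref{lem:complexity} with $\cI=\{\startv\}$, absorbing the extra factor $n$ into the stated running time. Your added remarks on why $\{\startv\}$-connectedness coincides with connectedness merely spell out details the paper leaves implicit.
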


\begin{proof}
For every vertex $s^* \in V$, we run the dynamic programming algorithm for $\cI = \{s^*\}$, i.e., we exhaustively guess a vertex in the first bag of some fixed solution.
By Lemma \ref{lem:complexity}, the total running time is as claimed.
\end{proof}

Let us point out that we did not try to optimize the dependence of the degree of the polynomial function in Theorem \ref{thm:main} on $k$, as we were only interested in finding a polynomial algorithm.

\section{Open problems} \label{sec:open-problems}

As pointed out, both pathwidth and connected pathwidth are asymptotically the same for an arbitrary graph $G$, namely $\cpw{G}/\pathwidth{G}\leq 2+o(1)$.
However, there are several open questions regarding the complexity of exact algorithms for connected pathwidth.
One such immediate question that is a natural next step in the context of our work is whether connected pathwidth is FPT with respect to this parameter?
Also, it is not known if connected pathwidth can be computed faster than in time $O^*(2^n)$ for an arbitrary $n$-vertex graph (recall that this is possible for pathwidth).

The notion of connected pathwidth appeared in the context of pursuit-evasion games called node search, edge search or mixed search.
A challenging and long-standing open question related to those games is whether their connected variants belong to NP?
See~\cite{BFFFNST12} for more details regarding this question.

\section*{Acknowledgements}
This research has been partially supported by National Science Centre (Poland) grant number 2015/17/B/ST6/01887.

\bibliographystyle{abbrv}
\bibliography{sample}

\end{document}